\documentclass[runningheads]{llncs}

\usepackage{amsmath,amssymb}
\usepackage{tikz}
\usepackage{bm}
\usepackage{algorithm}
\usepackage{algpseudocode}
\newcommand{\Arena}{(P,V,(V_p)_{p \in P},v_0,E)}
\newcommand{\Play}{\mathit{Play}}
\newcommand{\Hist}{\mathit{Hist}}
\renewcommand{\O}{\mathcal{O}}
\newcommand{\Inf}{\mathit{Inf}}
\newcommand{\Win}{\mathrm{Win}}
\newcommand{\G}{\mathcal{G}}
\newcommand{\bmsigma}{\bm{\sigma}}
\newcommand{\bmalpha}{\bm{\alpha}}

\newcommand{\out}{\mathrm{out}}
\newcommand{\Obj}{\mathrm{Obj}}

\newcommand{\Knw}{\mathit{knw}}
\newcommand{\PW}{\mathsf{pw}}
\newcommand{\GW}{\mathsf{gw}}
\newcommand{\PG}{\mathsf{pg}}
\newcommand{\PGW}{\mathsf{pgw}}

\newcommand{\Winnable}{\mathrm{Winnable}}

\newcommand{\Objp}{\mathrm{Obj}^{p,O_p}}
\newcommand{\Objpk}{\mathrm{Obj}^{p,O_p}_\Knw}
\newcommand{\Objpok}{\mathrm{Obj}^{p,O_p}_{\Omega,\Knw}}

\newcommand{\Out}{\mathrm{out}}
\newcommand{\F}{\mathcal{F}}
\newcommand{\Nash}{\mathrm{Nash}}

\newcommand{\calG}{\mathcal{G}}
\newcommand{\calO}{\mathcal{O}}
\newcommand{\balpha}{{\boldsymbol\alpha}}
\newcommand{\bsigma}{{\boldsymbol\sigma}}
\newcommand{\powerset}[1]{2^{#1}}

\newcommand{\Muller}{\mathop{\mathrm{Muller}}\nolimits}
\newcommand{\calF}{\mathcal{F}}

\let\emptyset=\varnothing

\newcommand{\Buchi}{\mathrm{B\ddot{u}chi}}

\title{A Game-Theoretic Approach to Indistinguishability of Winning Objectives \\as User Privacy}
\titlerunning{Indistinguishability of Winning Objectives}
\author{
	Rindo Nakanishi\inst{1} \and
	Yoshiaki Takata\inst{2} \and
	Hiroyuki Seki\inst{1}
}
\authorrunning{R. Nakanishi et al.}
\institute{
	Graduate School of Informatics, Nagoya University \\
	Furo-cho, Chikusa, Nagoya 464-8601, Japan \\
	\email{\{rindo,seki\}@sqlab.jp}
	\and School of Informatics, Kochi University of Technology \\
	Tosayamada, Kami City, Kochi 782-8502, Japan \\
	\email{takata.yoshiaki@kochi-tech.ac.jp}
}

\begin{document}
\maketitle
	\begin{abstract}
	Game theory on graphs is a basic tool in computer science.
	In this paper, we propose a new game-theoretic framework for studying the privacy protection of a user
	who interactively uses a software service.
	Our framework is based on the idea that an objective of a user using software services should
	not be known to an adversary because the objective is often closely related to personal information of the user.
	We propose two new notions, 
	$\calO$-indistinguishable strategy ($\calO$-IS) and objective-indistinguishability equilibrium (OIE).
	For a given game and a subset $\calO$ of winning objectives (or objectives in short),
	a strategy of a player is $\calO$-indistinguishable if
	an adversary cannot shrink $\calO$ by excluding any objective $O$ from $\calO$
	as an impossible objective.
	A strategy profile, which is a tuple of strategies of all players, is
	an OIE if the profile is locally maximal in the sense that
	no player can expand her set of objectives
	indistinguishable from her
	real objective from the viewpoint of an adversary.
	We show that for a given multiplayer game with Muller objectives,
	both of the existence of an $\calO$-IS and that of OIE are decidable.
\end{abstract}

\keywords{graph game, Muller objective, $\calO$-indistinguishable strategy,\\ objective-indistinguishability equilibrium}
\section{Introduction}
		Indistinguishability is a basic concept in security and privacy, meaning that 
anyone who does not have the access right to secret information cannot 
distinguish between a target secret data and other data. 
For example, a cryptographic protocol may be considered secure if 
the answer from an adversary who tries to attack the protocol 
is indistinguishable from a random sequence (computational indistinguishability) \cite{Go01}. 
In the database community, $k$-anonymity has been frequently used as a criterion on 
privacy of a user's record in a database; a database is $k$-anonymous if 
we cannot distinguish a target record from at least $k-1$ records whose public attribute values
are the same as those of the target record \cite{Sw02}. 

In this paper, we apply indistinguishability to defining and solving problems on 
privacy of a user who interacts with other users and/or software tools. 
Our basic framework is a multiplayer non-zero-some game played on a game arena, 
which is a finite directed graph with the initial vertex \cite{Br17,BCJ18}.  
A game has been used as the framework of reactive synthesis problem \cite{PR89,FKL10}.
A play in a game arena is an infinite string of vertices starting with the initial vertex and 
along edges in the game arena.  
To determine the result (or payoff) of a play, 
a winning objective $O_p$ is specified for each player $p$. 
If the play satisfies $O_p$, then we say that the player $p$ wins in this play. 
Otherwise, the player $p$ loses. 
A play is determined when each player determines her strategy in the game. 
A strategy $\sigma$ of a player $p$ is called a winning strategy if the player $p$ always wins
by using $\sigma$, i.e., 
any play consistent with the strategy $\sigma$ satisfies her winning objective 
regardless of the other players' strategies. 
One of the main concerns in game theory is to decide whether there is a winning strategy
for a given player $p$ and if so, to construct a winning strategy for $p$. 
Note that there may be more than one winning strategies for a player; 
she can choose any one among such winning strategies. 
In the literatures, a winning objective is {\em a priori} given as a component of a game. 
In this study, we regard a winning objective of a player is her private information
because objectives of a user of software services are closely related to her private information.
For example, 
users of e-commerce websites may select products to purchase 
depending on their preference, income and health condition, etc., 
which are related to private information of the users. 
Hence, it is natural for a player to choose a winning strategy that maximizes 
the indistinguishability of her winning objective from the viewpoint of an adversary who may 
observe the play and recognize which players win the game. 
For a subset $\cal{O}$ of winning objectives which a player $p$ wants to be indistinguishable from one another, 
we say that a strategy of $p$ is $\cal{O}$-indistinguishable if an adversary cannot make $\cal{O}$
smaller as the candidate set of winning objectives. 
The paper discusses the decidability of some problems related to $\cal{O}$-indistinguishability. 

Another important problem in game theory is to find a good combination of strategies of all players, which 
provides a locally optimal play.  
A well-known criterion is Nash equilibrium.  A combination of strategies (called a strategy profile)
is a Nash equilibrium if any player losing the game in that strategy profile 
cannot make herself a winner by changing her strategy alone. 
This paper introduces objective-indistinguishability equilibrium (OIE) as a criterion of local optimality 
of a strategy profile; a strategy profile is OIE if and only if no player can extend 
the indistinguishable set of winning objectives by changing her strategy alone. 
The paper also provides the decidability results on OIE. 

\paragraph*{Related work}
As already mentioned, this paper focuses on multiplayer turn-based non-zero-sum games. 
There is a generalization of games where each player can only know 
partial information on the game, which is called an imperfect information game\cite{AG22,BMMRY21,BMV17,CD10,CHP14}. 
While the indistinguishability proposed in this paper shares such restricted observation
with imperfect information games, the large difference is that 
we consider an adversary who is not a player but an individual who observes partial information
on the game while players themselves may obtain only partial information in imperfect information games. 


There are many privacy notions and a vast amount of literatures studying privacy issues. 
Among them, $k$-anonymity is one of the well-known notions originated in the database community. 
A database $D$ is $k$-{\em anonymous} \cite{Sa01,Sw02} if for any record $r$ in $D$, there are at 
least $k-1$ records different from $r$ such that the values of quasi-identifiers
of $r$ and these records are the same.  Here, a set of quasi-identifiers is a subset of 
attributes that can `almost' identify the record such as $\{$zip-code, birthday, income$\}$. 
Hence, if $D$ is $k$-anonymous, an adversary knowing the quasi-identifiers of some user $u$ 
cannot identify the record of $u$ in $D$ among the $k$ records with the same values of the quasi-identifiers.
Methods for transforming a database to the one satisfying $k$-anonymity have been 
investigated \cite{BKBL07,BA05}. 
Also, refined notions such as $\ell$-{\em diversity} \cite{MGK07} and $t$-{\em closeness} \cite{LLV07} 
have been proposed by considering the statistical distribution of the attribute values. 

However, these notions suffer from so called non-structured zero and mosaic effect. 
Actually, it is known that there is no way of protecting perfect privacy from an 
adversary who can use an arbitrary external information except the target privacy itself. 
The notion of $\varepsilon$-differential privacy 
where $\varepsilon>0$ was proposed to overcome the weakness of the classical notions of privacy. 
In a nutshell, a query $Q$ to a database $D$ is $\varepsilon$-\emph{differentially private}
(abbreviated as $\varepsilon$-DP) \cite{DMNS06,Dw06} 
if for any person $u$, 
the probability that we can infer whether the information on $u$ is contained in $D$ or not  
by observing the result of $Q(D)$ is negligible (very small) in terms of $\varepsilon$.  
(Also see \cite{Dw08,DR14} as comprehensive tutorials.) 
As the privacy protection of individual information used in data mining and machine learning
is becoming a serious social problem (see \cite{SSSS17} for example), 
methods of data publishing that guarantees $\varepsilon$-DP have been extensively studied
\cite{FWCY10,ABCP13,ACGMMTZ16,SS15,SSSS17}.

Quantitative information flow (abbreviated as QIF) \cite{CPP08,Sm09} is 
another way of formalizing privacy protection or information leakage.
QIF of a program $P$ is the mutual information of the secret input 
$X$ and the public output $Y$ of the program $P$ in the sense of Shannon theory where 
the channel between $X$ and $Y$ is a program which has logical semantics. 
Hence, QIF analysis uses not only the calculation of probabilities but also 
program analysis such as type inference \cite{CHM07} and symbolic execution. 

We have mentioned a few well-known approaches to formally modeling privacy protection 
in software systems; however, these privacy notions, 
even QIF that is based on the logical semantics of a program, 
share the assumption that private information is a static value or a distribution of values.

In contrast, our approach assumes that privacy is 
a purpose of a user's behavior.
The protection of this kind of privacy has not been studied to the best of our knowledge.  
As an extension of rational synthesis, 
Kupferman and Leshkowitz have introduced the synthesis problem of privacy preserving systems~\cite{KL22}; 
the problem is for given multivalued LTL formulas representing secrets as well as an LTL formula representing a specification, 
to decide whether there is a reactive program that satisfies the specification 
while keeping the values of the formulas representing secrets unknown. 
This study treats the secrets as values as in the previous studies, and the approach is very different from ours.

While we adopt Nash equilibrium, there are other criteria for local optimality of strategy profiles, namely, 
secure equilibrium (SE) \cite{CHJ06} and doomsday equilibrium (DE) \cite{CDFR17}. 
SE is a strategy profile such that
no player can improve her payoff or punish any other player without loss of her own payoff
by changing only her strategy.
SE is used for a verification of component-based systems where each component has its own objective.
DE is a strategy profile such that all players are winners and
each player can make all players lose as retaliation when she becomes a loser because some other players change their strategies.
SE and DE are secure in the sense that no player is punished by other player(s)
and not directly related to user privacy.

\paragraph*{Outline}
In Section \ref{sec:prel}, we define some notions and notations 
on multiplayer turn-based deterministic games used in subsequent sections.
Moreover, in Section \ref{sec:prel}, we define an $(\bmalpha_1,\ldots,\bmalpha_n)$-Nash equilibrium (NE) 
as a strategy profile which is simultaneously a NE 
for all objective profiles $\bmalpha_1,\ldots,\bmalpha_n$.
We show that whether there exists an $(\bmalpha_1,\ldots,\bmalpha_n)$-NE is decidable in Theorem 2,
which will be used in Section \ref{sec:results}.
In Section \ref{sec:concept}, we propose two new notions, namely
$\calO$-indistinguishable strategy ($\calO$-IS) and objective-indistinguishability equilibrium (OIE).
$\calO$-IS is a strategy such that an adversary cannot shrink the set $\calO$ of candidate objectives of a player.
OIE is a strategy profile such that no player can expand her own set of candidate objectives.
In Section \ref{sec:results}, we show that for a given multiplayer game with Muller objectives,
both the existence of an $\calO$-IS and that of OIE are decidable.
In Section \ref{sec:conclusion}, we give a conclusion of this paper.

\section{Preliminaries}\label{sec:prel}
		\begin{definition}
	A game arena is a tuple $G = \Arena$, where
	\begin{itemize}
		\item $P$ is a finite set of players,
		\item $V$ is a finite set of vertices,
		\item $(V_p)_{p \in P}$ is a partition of $V$, 
			  namely, $V_i \cap V_j = \varnothing$ for all $i\neq j \ (i,j \in P)$ and
			  $\bigcup_{p \in P} V_p = V$,
	  	\item $v_0 \in V$ is the initial vertex, and
		\item $E \subseteq V \times V$ is a set of edges.
	\end{itemize}
\end{definition}
As defined later, a vertex in $V_p$ is controlled by a player $p$, i.e., 
when a play is at a vertex in $V_p$, the next vertex is selected by player $p$.
This type of games is called \emph{turn-based}.
There are other types of games, concurrent and stochastic games.
In a concurrent game~\cite{AG22}, each vertex may be controlled by more than one (or all) players.
In a stochastic game~\cite{Um08,UW11,CAH05}, each vertex is controlled by a player or a special entity \emph{nature}
who selects next nodes according to a probabilistic distribution for next nodes given as a 
part of a game arena.
Moreover, a strategy of a player selects a next node stochastically.
In this paper, we consider only deterministic turn-based games.

	\paragraph*{Play and history}
		An infinite string of vertices $v_0 v_1 v_2 \cdots \ (v_i \in V, i \geq 0)$ starting from 
the initial vertex $v_0$ is a \emph{play}
if $(v_i,v_{i+1}) \in E$ for all $i \geq 0$.
A \emph{history} is a non-empty (finite) prefix of a play.
The set of all plays is denoted by $\Play$ and the set of all histories is denoted by $\Hist$.
We often write a history as $hv$ where
$h \in \Hist \cup \{\varepsilon\}$ and $v \in V$.
For a player $p \in P$, let $\Hist_p = \{ hv \in \Hist \mid v \in V_p \}$.
That is, $\Hist_p$ is the set of histories ending with a vertex controlled by player $p$.
For a play $\rho=v_0v_1v_2\cdots \in \Play$, we define $\Inf(\rho) = \{ v\in V\mid \forall i\geq0.\ \exists j\geq i.\ v_j=v\}$.

	\paragraph*{Strategy}
		For a player $p \in P$, a \emph{strategy} of $p$ is a function $\sigma_p: \Hist_p \to V$
such that $(v, \sigma_p(hv)) \in E$ for all $hv \in \Hist_p$.
At a vertex $v\in V_p$, player $p$ chooses $\sigma_p(hv)$ as the next vertex according to her strategy $\sigma_p$. 
Note that because the domain of $\sigma_p$ is $Hist_p$, the next vertex may depend on the whole history in general.
Let $\Sigma^p_{\G}$ denote the set of all strategies of $p$.
A \emph{strategy profile} is a tuple $\bmsigma = (\sigma_p)_{p \in P}$
of strategies of all players, namely $\sigma_p \in \Sigma^p_{\G}$ for all $p \in P$.
Let $\Sigma_{\G}$ denote the set of all strategy profiles.
For a strategy profile $\bmsigma \in \Sigma_{\G}$ and
a strategy $\sigma'_p \in \Sigma^p_{\G}$ of a player $p\in P$,
let $\bmsigma[p \mapsto \sigma'_p]$ denote the strategy profile obtained from $\bmsigma$ by replacing 
the strategy of $p$ in $\bmsigma$ with $\sigma'_p$.
We define the function $\out_{\G}:\Sigma_{\G} \to \Play$ as $\out_{\G}((\sigma_p)_{p \in P}) = v_0 v_1 v_2 \cdots$
where $v_{i+1} = \sigma_p(v_0\cdots v_i)$ for all $i \geq 0$ and for $p\in P$ with $v_i \in V_p$.
We call the play $\out_\G(\bmsigma)$ the \emph{outcome} of $\bmsigma$.
We also define the function $\out^p_{\G}:\Sigma^p \to 2^{\Play}$ for each $p \in P$ as
$\out^p_{\G}(\sigma_p) = 
	\{ v_0 v_1 v_2 \cdots \in \Play \mid 
	\text{$v_i \in V_p \Rightarrow v_{i+1}=\sigma_p(v_0\cdots v_i)$ for all $i\geq 0$}\}$.
A play $\rho \in \out^p_\G(\sigma_p)$ is called a play consistent with the strategy $\sigma_p$ of player $p$. 
By definition, for a strategy profile $\bmsigma = (\sigma_p)_{p\in P} \in \Sigma_{\G}$,
it holds that $\bigcap_{p\in P}\out^p_{\G}(\sigma_p)=\{\out_{\G}(\bmsigma)\}$.

	\paragraph*{Objective}
In this paper, we assume that the result that a player obtains from a play is either a winning or a losing.
Since we are considering non-zero-sum games, one player's winning does not mean other players' losing.
Each player has her own winning condition over plays, and we model the condition as a subset $O$ of plays; 
i.e., the player wins if the play belongs to the subset $O$.
We call the subset $O\subseteq\Play$ the \emph{objective} of that player.
In this paper, we focus on the following important classes of objectives:
\begin{definition}\label{def:obj}
Let $U\subseteq V$ be a subset of vertices, 
$c:V\to\mathbb{N}$ be a coloring function,
$(F_k,G_k)_{1\leq k\leq l}$ be pairs of sets $F_k,G_k\subseteq V$ and
$\calF\subseteq 2^V$ be a subset of subsets of vertices.
We will use $U$, $c$, $(F_k,G_k)_{1\leq k\leq l}$ and $\calF$ as finite representations for specifying an objective as follows:
\begin{itemize}
	\item B\"{u}chi objective: $\mathrm{B\ddot{u}chi}(U)=\{ \rho\in\Play \mid \Inf(\rho) \cap U \neq \varnothing\}$.
	\item Co-B\"{u}chi objective: $\mathrm{Co\mathchar`-B\ddot{u}chi}(U)=\{ \rho\in\Play \mid \Inf(\rho) \cap U = \varnothing\}$.
	\item Parity objective: $\mathrm{Parity}(c) = \{ \rho=v_0v_1v_2\cdots\in\Play \mid \text{$\max(\{c(v_i) \mid i\geq 0\})$ is even} \}$.
	\item Rabin objective: $\mathrm{Rabin}\bigl((F_k,G_k)_{1\leq k\leq l}\bigr)=
		\{\rho\in\Play \mid 1\leq \exists k\leq l.\ \Inf(\rho)\cap F_k=\varnothing \wedge \Inf(\rho) \cap G_k \neq \varnothing \}$.
	\item Streett objective: $\mathrm{Streett}\bigl((F_k,G_k)_{1\leq k\leq l}\bigr)=
		\{\rho\in\Play \mid 1\leq \forall k\leq l.\ \Inf(\rho)\cap F_k\neq\varnothing \vee \Inf(\rho) \cap G_k=\varnothing \}$.
	\item Muller objective: $\mathrm{Muller}(\F)=\{ \rho \in \Play \mid \Inf(\rho)\in\F\}$.
\end{itemize}
\end{definition}
Note that each objective defined in Definition \ref{def:obj} is also a Muller objective:
For example, $\Buchi(U)=\Muller(\{I\subseteq V\mid I\cap U \neq \varnothing\})$.
We define the description length of a Muller objective
$\Muller(\calF)$ for $\calF\subseteq\powerset{V}$ is $|V|\cdot|\calF|$,
because each element of $\calF$, which is a subset of $V$,
can be represented by a bit vector of length $|V|$.
By $\Omega \subseteq 2^{\Play}$, we refer to a certain class of objectives.
For example, $\Omega = \{ \text{B\"{u}chi}(U) \mid U \subseteq V \} \subseteq 2^{Play}$
is the class of B\"{u}chi objectives.

An \emph{objective profile} is a tuple $\bmalpha = (O_p)_{p \in P}$ of objectives of all players,
namely $O_p \subseteq \Play$ for all $p \in P$.
For a strategy profile $\bmsigma \in \Sigma_{\G}$ and an objective profile $\bmalpha = (O_p)_{p \in P}$,
we define the set $\Win_\G(\bmsigma,\bmalpha) \subseteq P$ of winners as
$\Win_\G(\bmsigma,\bmalpha) = \{ p \in P \mid \out_{\G}(\bmsigma) \in O_p \}$.
That is, a player $p$ is a winner if and only if  $\out_\G(\bmsigma)$ belongs to the objective $O_p$ of $p$. 
If $p \in \Win_\G(\bmsigma, \bmalpha)$, we also say that $p$ wins the game $\G$ with $\bmalpha$
(by the strategy profile $\bmsigma$). 
Note that it is possible that there is no player who wins the game or all the players win the game. 
In this sense, a game is non-zero-sum. 

We abbreviate $\Sigma^p_{\G},\Sigma_{\G},\out^p_{\G},\out_{\G}$ and $\Win_{\G}$ as
$\Sigma^p,\Sigma,\out^p,\out$ and $\Win$, respectively, if $\G$ is clear from the context.

	\paragraph*{Winning strategy}
		For a game arena $\G$, a player $p \in P$ and an objective $O_p \subseteq \Play$,
a strategy $\sigma_p \in \Sigma^p$ of $p$ such that $\out^p(\sigma_p)\subseteq O_p$ is called a \emph{winning strategy}
of $p$ for $\G$ and $O_p$
because if $p$ takes $\sigma_p$ as her strategy then she wins against any combination of strategies of the other players.
(Recall that $\out^p(\sigma_p)$ is the set of all plays consistent with $\sigma_p$.)
For a game arena $\G$ and a player $p\in P$,
we define the set $\Winnable^p_\G$ of objectives permitting a winning strategy as 
$\Winnable^p_\G=\{O \mid \exists \sigma_p \in \Sigma^p_\G.\ \out^p_\G(\sigma_p)\subseteq O\}$.
For a player $p$, $O\in\Winnable^p_\G$ means that $p$ has a winning strategy for $\G$ and $O$. 
On the existence of a winning strategy for a Muller objective, the following theorem is known.
\begin{theorem}[{\cite[Theorem 21]{Br17}}]\label{thm:Muller}
Let $\G=\Arena$ be a game arena and $O_p\subseteq \Play$ be a Muller objective of $p\in P$.
Deciding whether there exists a winning strategy of $p$ for $O_p$ is $\mathsf{P}$-complete.
\qed
\end{theorem}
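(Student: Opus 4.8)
The plan is to prove the two directions of $\mathsf{P}$-completeness separately: containment in $\mathsf{P}$ and $\mathsf{P}$-hardness.

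For containment in $\mathsf{P}$, I would first reduce the multiplayer question to a two-player zero-sum game. By definition $\sigma_p$ is winning for $p$ and $O_p$ iff every play in $\out^p_\G(\sigma_p)$ lies in $O_p$, that is, iff $\sigma_p$ prevails against every choice of moves made at vertices outside $V_p$; so the question is exactly the solvability from $v_0$ of the two-player Muller game in which the protagonist owns $V_p$, a single antagonist owns $V\setminus V_p$, the edges are $E$, and the condition is $\Muller(\calF)$ with $O_p=\Muller(\calF)$. Merging the non-$p$ players costs nothing and does not change the description length. It then remains to solve a two-player Muller game whose condition $\calF\subseteq 2^{V}$ is presented explicitly as a list of subsets. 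Here I would \emph{not} translate to a parity game via a latest-appearance-record memory, since that inflates the vertex set by a factor of order $|V|!$ and yields only an exponential bound; instead I would invoke the known polynomial-time algorithm for explicitly presented Muller games, whose running time is polynomial in $|V|+|E|+|\calF|$, hence in the description length $|V|\cdot|\calF|$. (Determinacy of two-player Muller games and computability of the protagonist's winning region are classical; the substance of the upper bound is the polynomial running time.)

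For $\mathsf{P}$-hardness I would reduce from the Monotone Circuit Value Problem, which is $\mathsf{P}$-complete under logarithmic-space reductions. Given a monotone Boolean circuit with all inputs fixed, I take one vertex per gate and per input: $\vee$-gates are owned by the protagonist, $\wedge$-gates by the antagonist, and every gate has edges to the two vertices that feed it; an input whose value is $1$ becomes an absorbing vertex with a self-loop (a \emph{target}), an input whose value is $0$ becomes an absorbing vertex with a self-loop that is not a target, and the initial vertex is the output gate. Every play descends through the acyclic gate graph and then loops at a leaf, so $\Inf(\rho)$ is the singleton of that leaf; taking $\calF=\{\{t\}\mid t \text{ is a target}\}$ makes the protagonist win precisely when the play reaches a target. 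A routine induction on gate depth then shows that the protagonist has a winning strategy from the output gate iff the circuit evaluates to $1$, and the construction is plainly logspace-computable. Combining the two directions yields $\mathsf{P}$-completeness.

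I expect the main obstacle to be the polynomial upper bound for explicitly presented Muller games: the off-the-shelf algorithms — the latest-appearance-record reduction to parity, and Zielonka's recursion guided by the Zielonka tree — are worst-case exponential, and it is not evident a priori that either runs in time polynomial in $|\calF|$. Making the bound self-contained would require replacing the Zielonka tree by a leaner recursion that essentially tracks bounded progress scores toward realizing each member of $\calF$ as the limit set of a play; that is the technical core. Granting it, the reduction to a two-player game and the hardness reduction are routine.
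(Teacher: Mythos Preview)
The paper does not prove this theorem at all: it is quoted verbatim from the cited reference \cite[Theorem~21]{Br17}, with the \textsf{qed} symbol placed immediately after the statement and no argument supplied. There is therefore no paper proof to compare your proposal against; the result is used purely as a black box in the subsequent decidability arguments.

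As for the content of your sketch, it is a faithful outline of how the cited result is obtained. The reduction to a two-player zero-sum game is immediate, and the $\mathsf{P}$-hardness via Monotone Circuit Value is the standard argument (indeed it already applies to reachability games, a trivial sub-case of Muller). You are right that the polynomial upper bound for explicitly presented Muller conditions is the only substantive ingredient, and right that neither the LAR construction nor a naive Zielonka recursion delivers it. Your description of the workaround---replacing the Zielonka tree by a recursion that tracks ``bounded progress scores toward realizing each member of $\calF$''---is too vague to count as a proof on its own, but it gestures at the actual algorithm (due to Horn) that underlies the citation. Since the present paper only invokes the theorem as a known fact, this level of detail is more than the paper itself provides.
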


For such non-zero-sum multiplayer games as considered in this paper, 
we often use Nash equilibrium, defined below, as a criterion for a strategy profile to be locally optimal.

	\paragraph*{Nash equilibrium}
		Let $\bmsigma\in \Sigma$ be a strategy profile and $\bmalpha=(O_p)_{p\in P}$ be an objective profile.
A strategy profile $\bmsigma$ is called a \emph{Nash equilibrium} (NE) for $\bmalpha$ if it holds that
\[
	\forall p \in P.\ \forall \sigma_p \in \Sigma^p.\ p\in\Win(\bmsigma[p\mapsto\sigma_p],\bmalpha) 
	\Rightarrow p \in\Win(\bmsigma,\bmalpha).
\]
Intuitively, $\bsigma$ is a NE if every player $p$ cannot improve the result (from losing to winning) 
by changing her strategy alone.
For a strategy profile $\bmsigma\in\Sigma$,
we call a strategy $\sigma_p\in\Sigma^p$ 
such that $p\notin\Win(\bmsigma,\bmalpha)\wedge p\in\Win(\bmsigma[p\mapsto \sigma_p],\bmalpha)$
a \emph{profitable deviation} of $p$ from $\bmsigma$.
Hence, $\bmsigma$ is a NE if and only if no player has a profitable deviation from $\bmsigma$.
Because $p\in \Win(\bmsigma,\bmalpha)$ is equivalent to $\out(\bmsigma) \in O_p$, 
a strategy profile $\bmsigma \in \Sigma$ is a NE for $\bmalpha$ if and only if 
\begin{equation}
	\forall p\in P.\ \forall \sigma_p\in\Sigma^p.\ \out(\bmsigma[p\mapsto\sigma_p])\in O_p
	\Rightarrow \out(\bmsigma)\in O_p.
	\label{con:nash}
\end{equation}
We write Condition (\ref{con:nash}) as $\Nash(\bmsigma,\bmalpha)$.

Below we define an extension of NE that is a single strategy profile 
simultaneously satisfying the condition of NE for more than one objective profiles. 
We can prove that the existence of this extended NE is decidable (Theorem \ref{thm:n-nash}), 
and later we will reduce some problems to the existence checking of this type of NE\@.
\begin{definition}\label{def:n-nash}
	For a game arena $\G=\Arena$ and objective profiles $\bmalpha_1,\ldots,\bmalpha_n$,
	a strategy profile $\bmsigma\in\Sigma$ is called an \emph{$(\bmalpha_1,\ldots,\bmalpha_n)$-Nash equilibrium}
	if $\Nash(\bmsigma,\bmalpha_j)$ for all $1\leq j\leq n$.
\end{definition}

\begin{theorem}\label{thm:n-nash}
	Let $\G=\Arena$ be a game arena and $\bmalpha_j=(O_p^j)_{p\in P}$ $(1\leq j \leq n)$ be objective profiles
	over Muller objectives.
	Deciding whether there exists an $(\bmalpha_1,\ldots,\bmalpha_n)$-NE
	is decidable.
\end{theorem}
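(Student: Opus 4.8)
The plan is to reduce the problem to finitely many instances of ``solving a two‑player zero‑sum Muller game'' together with a finite search over the graph. Observe first that an $(\bmalpha_1,\dots,\bmalpha_n)$-NE, if it exists, has a single outcome play $\rho=\out(\bmsigma)$; put $I=\Inf(\rho)$, and for each player $p$ let $J_p=\{\,j\mid\rho\notin O_p^j\,\}$ be the set of profiles in which $p$ loses along $\rho$. The goal is to show that the existence of an $(\bmalpha_1,\dots,\bmalpha_n)$-NE is equivalent to the existence of a tuple $(J_p)_{p\in P}$ (each $J_p\subseteq\{1,\dots,n\}$), a set $I\subseteq V$, and a play $\rho$ from $v_0$ with $\Inf(\rho)=I$ meeting a few side conditions, each of which is decidable; the procedure then enumerates the $2^{n|P|}$ possible tuples $(J_p)_p$ and the $2^{|V|}$ possible sets $I$.

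For a fixed guess $(J_p)_p$, I would first compute, for each player $p$, the set $Z_p^{J_p}$ of vertices from which $p$ has \emph{no} winning strategy for $\bigcup_{j\in J_p}O_p^j$; equivalently, by determinacy of finite turn-based Muller games (viewing the coalition $P\setminus\{p\}$ as a single adversary), the set of vertices from which that coalition can force $\bigcap_{j\in J_p}\overline{O_p^j}$ (and $Z_p^{J_p}=V$ if $J_p=\varnothing$). This is effective because Muller objectives are closed under finite Boolean combinations with only polynomial blow-up in the representation ($\Muller(\F_1)\cup\Muller(\F_2)=\Muller(\F_1\cup\F_2)$ and $\overline{\Muller(\F)}=\Muller(2^V\setminus\F)$), so $\bigcup_{j\in J_p}O_p^j$ is again a Muller objective and Theorem~\ref{thm:Muller}, applied with each vertex in turn as the initial vertex, decides membership in $Z_p^{J_p}$. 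Let $Z=\bigcap_{p\in P}Z_p^{J_p}$. I would then search by brute force for a set $I\subseteq V$ such that (i) $I\subseteq Z$, the subgraph induced by $I$ is strongly connected, and every vertex of $I$ has a successor in $I$ (so $\Inf(\rho)=I$ is realizable by an ultimately-periodic play); (ii) $I$ is reachable from $v_0$ by a path staying inside $Z$; and (iii) for all $p$ and $j$, $I\in\F_p^j$ iff $j\notin J_p$. If some guess admits such an $I$, answer ``yes''; otherwise ``no''.

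Correctness of the ``if'' direction: given $(J_p)_p$, $I$, and a witnessing play $\rho$, let $\bmsigma$ play along $\rho$ and, the moment some player $q$ first deviates—necessarily at a vertex $u\in V_q$ lying on $\rho$, hence in $Z_q^{J_q}$—let all other players switch to a coalition strategy forcing $\bigcap_{j\in J_q}\overline{O_q^j}$ from $u$. For each $j$ this is a NE for $\bmalpha_j$: if a deviator $q$ loses $O_q^j$ along $\rho$ then $j\in J_q$, so after any deviation the coalition secures $\overline{O_q^j}$, and (iii) guarantees $\rho$ itself already realizes the winner pattern encoded by the $J_p$'s. Conversely, from an actual $(\bmalpha_1,\dots,\bmalpha_n)$-NE $\bmsigma$, take $\rho=\out(\bmsigma)$, $I=\Inf(\rho)$, $J_p=\{j\mid\rho\notin O_p^j\}$; conditions (i)--(iii) are immediate, and $\rho$ must stay inside $Z$, since if a vertex of $\rho$ were outside $Z_q^{J_q}$ then by determinacy $q$ would have a winning strategy there for $\bigcup_{j\in J_q}O_q^j$, which composed with ``follow $\rho$ up to that vertex'' is a profitable deviation against $\bmalpha_{j^\ast}$ for some $j^\ast\in J_q$, contradicting the NE property.

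The main obstacle is pinning down the right characterization: because $\bmsigma$ is one strategy profile, the coalition's single reaction to a deviation by $q$ must punish $q$ with respect to \emph{every} profile $\bmalpha_j$ in which $q$ is a loser at once, which is exactly why the proof is forced to work with the combined objective $\bigcap_{j\in J_q}\overline{O_q^j}$ and to guess, per player, the ``loss pattern'' $J_q$. Once that is in place, Boolean closure of Muller objectives plus Theorem~\ref{thm:Muller} reduce everything to finitely many zero-sum Muller-game solvings and a finite reachability/strong-connectivity search, so termination is clear; verifying that the ultimately-periodic play with $\Inf=I$ realizes the intended winner pattern and that the punishment profile is genuinely a NE for each $\bmalpha_j$ is the routine part.
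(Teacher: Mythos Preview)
Your argument is correct and takes a genuinely different route from the paper's. The paper (via Proposition~\ref{prop:NE} and Corollary~\ref{cor:n-nash} in the appendix) characterises outcomes of $(\bmalpha_1,\dots,\bmalpha_n)$-NE by a \emph{per-objective} condition: for every $p$, every $v_i\in V_p$ on $\rho$, and each $j$ individually, $O_p^j\in\Winnable^p_{(\G,v_i)}$ forces $\rho\in O_p^j$; Algorithm~\ref{alg:n-nash} then attaches to each vertex $v$ the intersection $O_v$ of its owner's locally winnable $O_p^j$, nondeterministically picks $V'\subseteq V$, and solves a one-player Muller game on $V'$ for $\bigcap_{v\in V'}O_v$---no loss pattern is ever guessed. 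You instead fix the per-player loss set $J_p$ in advance and work with the \emph{combined} objective $\bigcup_{j\in J_p}O_p^j$. This costs an extra $2^{n|P|}$ enumeration, but it buys a clean threat construction: from every vertex of $\rho$ the coalition has \emph{one} strategy that simultaneously denies the deviator all objectives in $J_p$. That is exactly the step the paper's profile-by-profile appeal to Proposition~\ref{prop:NE} leaves implicit---knowing that the coalition wins each $\overline{O_p^j}$ separately does not immediately give a single strategy for $\bigcap_{j\in J_p}\overline{O_p^j}$, since $\bigcup_{j\in J_p}O_p^j$ can be winnable for $p$ even when no single $O_p^j$ is---and your use of determinacy of the combined zero-sum game makes this point explicit.
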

A proof of this theorem is given in the Appendix.

\section{Indistinguishable Strategy and Related Equilibrium}\label{sec:concept}
	In this section, we propose two new notions concerning on the privacy of a player: 
indistinguishable strategy and objective-indistinguishability equilibrium. 
We first define the set of possible objectives of a player in the viewpoint of an adversary 
that can observe restricted information on a game, a play and its result (i.e., which players win).

We assume that an adversary guesses objectives of players from the three types of information:
a play ($\mathsf{p}$), a game arena ($\mathsf{g}$) and a set of winners ($\mathsf{w}$) of the play.
We use a word $\Knw \in \{\PW,\GW,\PG,\PGW\}$ to represent a type of information that an adversary can use.
For example, an adversary guesses objectives from a play and winners when $\Knw=\PW$.
We do not consider the cases where $\Knw$ is a singleton because an adversary cannot guess anything from such information.
In either case, we implicitly assume that an adversary knows the set $V$ of vertices of the game arena.
Let $p \in P$ be a player, $O_p \subseteq \Play$ be an objective of $p$ 
and $\Omega \subseteq 2^\Play$ be one of the classes of objectives.
We define the function $\Objpok:\Sigma \to 2^{\Omega}$ as follows, which 
maps a strategy profile $\bmsigma\in\Sigma$ to the set of objectives of $p$ that an adversary guesses:
\begin{align*}
  \Objp_{\Omega,\PW}(\bsigma) &= \{ O\subseteq V^\omega\mid
     \begin{aligned}[t]
       &(\Out(\bsigma)\in O \land p\in\Win(\bsigma,\bmalpha))\lor {} \\
       &(\Out(\bsigma)\notin O \land p\notin\Win(\bsigma,\bmalpha)) \},
     \end{aligned}
  \\
  \Obj^{p,O_p}_{\Omega,\GW}(\bsigma) &= \{ O\in\Omega\mid
     \begin{aligned}[t]
	   &(p\in\Win(\bsigma,\bmalpha) \land O\neq\varnothing) \lor {}\\
	   &(p\notin\Win(\bsigma,\bmalpha) \land O\notin\Winnable^p) \},
     \end{aligned}
  \\
  \Obj^{p,O_p}_{\Omega,\PG}(\bsigma) &= \{ O\in\Omega\mid
       \Out(\bsigma)\in O
       \lor (\Out(\bsigma)\notin O \land O\notin\Winnable^p) \},
  \\
  \Obj^{p,O_p}_{\Omega,\PGW}(\bsigma) &= \{ O\in\Omega\mid
     \begin{aligned}[t]
       &(\Out(\bsigma)\in O \land p\in\Win(\bsigma,\bmalpha)) \lor {} \\
       &(\Out(\bsigma)\notin O \land p\notin\Win(\bsigma,\bmalpha) \land
         O\notin\Winnable^p) \},
     \end{aligned}
\end{align*}
where $\bmalpha$ is any objective profile in which the objective of $p$ is $O_p$.
(Note that for a given $\bmsigma$ whether $p\in \Win(\bmsigma, \bmalpha)$ or not 
does not depend on objectives of the players other than $p$ 
and hence we can use an arbitrary $\bmalpha$ containing $O_p$.)

The definitions of $\Objpok$ are based on the following ideas.
When $\Knw=\PW$, we assume that an adversary can observe the play and the set of winners
but he does not know the game arena.
The adversary can infer that 
the play $\out(\bmsigma)$ he observed belongs to the objective of a player $p$ 
if the adversary knows that $p$ is a winner,
and $\out(\bmsigma)$ does not belong to the objective of $p$ if $p$ is not a winner.
Note that the adversary does not know the real objective $O_p$ of player $p$.
For the adversary, any $O \subseteq V^\omega$ satisfying $\out(\bmsigma) \in O$ 
is a candidate of the objective of player $p$ when $p$ is a winner.
Similarly, any $O\subseteq V^\omega$ satisfying $\out(\bmsigma) \not\in O$ 
is a candidate objective of $p$ when $p$ is not a winner.
An adversary does not know the game arena because $\Knw=\PW$, that is,
he does not know the set of edges in the arena.
Therefore, the candidate objective $O$ cannot be restricted to a subset of plays 
(i.e., infinite strings of vertices along the edges in the game arena), 
but $O$ can be an arbitrary set of infinite strings of the vertices consistent with the information obtained by the adversary.

When $\Knw=\GW$, an adversary cannot observe the play, but he knows the game arena and can observe the set of winners.
If $p$ is a winner, the adversary can infer that 
$p$ has a strategy $\sigma_p$ such that $\out^p(\sigma_p) \cap O_p \neq \varnothing$.
Because there exists such a strategy $\sigma_p$ for all $O_p$ other than $\varnothing$,
he can remove only $\varnothing$ from the set of candidates for $p$'s objective.
On the other hand, if $p$ is a loser, 
the adversary can infer that $p$ has no winning strategy for $O_p$ because
we assume that every player takes a winning strategy for her objective when one exists.
Therefore, when $p$ loses, the adversary can narrow down the set of candidates for $p$'s objective 
to the set of objectives without a winning strategy.

The definition where $\Knw=\PG$ can be interpreted in a similar way.
Note that we have $\Objp_{\Omega,\PGW}(\bmsigma) = \Objp_{\Omega,\PW}(\bmsigma) \cap \Objp_{\Omega,\GW} \cap \Objp_{\Omega,\PG}$.

Since $p\in \Win(\bmsigma, \bmalpha)$ is equivalent to $\out(\bmsigma)\in O_p$, 
the above definitions can be rephrased as follows:
\begin{align*}
  \Obj^{p,O_p}_{\Omega,\PW}(\bsigma) &= \{ O\subseteq V^\omega\mid
    \Out(\bsigma)\in (O\cap O_p)\cup(\overline{O}\cap\overline{O_p}) \},
  \\
  \Obj^{p,O_p}_{\Omega,\GW}(\bsigma) &= \{ O\in\Omega\mid
     \begin{aligned}[t]
		 &(O\in\Winnable^p \Rightarrow \Out(\bsigma)\in O_p) \land {} \\
		 &(O=\varnothing \Rightarrow \out(\bmsigma)\notin O_p)\},
     \end{aligned}
  \\
  \Obj^{p,O_p}_{\Omega,\PG}(\bsigma) &= \{ O\in\Omega\mid
       O\in\Winnable^p \Rightarrow \Out(\bsigma)\in O \},
  \\
  \Obj^{p,O_p}_{\Omega,\PGW}(\bsigma) &= \{ O\in\Omega\mid
     \begin{aligned}[t]
       &\Out(\bsigma)\in (O \cap O_p)\cup(\overline{O}\cap\overline{O_p})
        \land {} \\
       &(O\in\Winnable^p \Rightarrow \Out(\bsigma)\in O \cap O_p) \}.
     \end{aligned}
\end{align*}

The reader may wonder why $O_p$ appears in this (alternative) definition 
in spite of the assumption that the adversary does not know $O_p$.
The condition $\out(\bmsigma) \in O_p$ (or $\not\in O_p$) only means that 
the adversary knows whether $p$ is a winner (or a loser) without knowing $O_p$ itself.
\begin{example}\label{exm:obj}
Figure \ref{fig:obj} shows a $1$-player game arena 
$\G=(\{1\},V,(V),v_0,E)$ where
$V=\{v_0,v_1,v_2\}$ and $E=\{(v_0,v_1),(v_0,v_2),(v_1,v_1),(v_2,v_2)\}$.
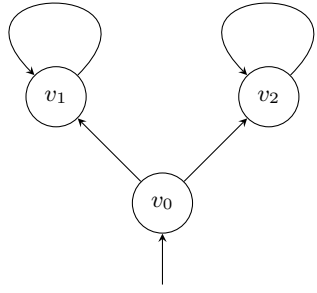
\begin{figure}[t]
        \centering
            \begin{tikzpicture}[everynode/.style={circle,draw,minimum size=0.8cm},>=stealth,node distance=2cm]
            \node[everynode] (v0) {$v_0$};
            \node[below of=v0,node distance=1.2cm] (start) {};
            \node[everynode,above left of=v0] (v1) {$v_1$};
            \node[everynode,above right of=v0] (v2) {$v_2$};
            \draw[->,loop] (v1) to (v1);
            \draw[->,loop] (v2) to (v2);
            \draw[->] (v0) to (v1);
            \draw[->] (v0) to (v2);
            \draw[->] (start) to (v0);
        \end{tikzpicture}
            \caption{$1$-player game arena with B\"{u}chi objectives}
            \label{fig:obj}
\end{figure}
We specify a B\"{u}chi objective by a set of accepting states, e.g.,
let $\langle v_1 \rangle$ denote $\Buchi(\{v_1\})=\{ \rho\in V^\omega \mid \Inf(\rho) \cap \{v_1\} \neq \varnothing\}$.
In this example, we assume the objective of player $1$ is $\langle \rangle=\varnothing\subseteq\Play$.
Therefore, player $1$ always loses regardless of her strategy.
There are only two strategies $\sigma_1$ and $\sigma_2$ of player $1$.
The strategy $\sigma_1$ takes the vertex $v_1$ as the next vertex at the initial vertex $v_0$
and then keeps looping in $v_1$.
On the other hand, the strategy $\sigma_2$ takes $v_2$ at $v_0$ and then keeps looping in $v_2$.
Let $\sigma_1$ be the strategy player $1$ chooses.
We have the play $\rho=\out(\sigma_1) =v_0v_1v_1v_1\cdots$.

We assume that an adversary knows that the objective of player $1$ is a B\"{u}chi objective.
Then, for each type of information $\Knw\in\{\PW,\GW,\PG,\PGW\}$, 
$\Obj^{1,\varnothing}_{\Buchi,\Knw}(\sigma_1)$ becomes as follows:
\begin{itemize}
	\item If $\Knw=\PW$, 
		then an adversary can deduce that $v_1$ is not an accepting state because he knows that 
		$\Inf(v_0v_1v_1\cdots) = \{ v_1 \}$ and player $1$ loses.
		Therefore, we have 
		$\Obj^{1,\varnothing}_{\Buchi,\PW}(\sigma_1)=
			\{\langle\rangle,\langle v_0 \rangle, \langle v_2 \rangle, \langle v_0,v_2 \rangle\}$.
		Note that in this game arena, there is no play passing $v_0$ infinitely often, 
		and thus $\langle \rangle$ and $\langle v_0 \rangle$ 
		(resp. $\langle v_2 \rangle$ and $\langle v_0,v_2 \rangle$) are equivalent actually. 
		However, because an adversary does not know the game arena when $\Knw = \PW$, 
		he should consider every infinite string over $V$ would be a play and thus 
		$\langle \rangle$ and $\langle v_0 \rangle$ are different for him when $\Knw = \PW$.  
		In the other cases where an adversary knows the game arena, 
		he also knows e.g. $\langle \rangle$ and $\langle v_0 \rangle$ are equivalent 
		and thus he would consider $\Omega = \{\langle \rangle,\langle v_1 \rangle,\langle v_2 \rangle,\langle v_1,v_2\rangle\}$.
	\item If $\Knw=\GW$, 
		then an adversary can deduce that neither $v_1$ nor $v_2$ is an accepting state because
		player $1$ loses in spite of the fact that there are strategies that pass through $v_1$ or $v_2$ infinitely often.
		Therefore, $\Obj^{1,\varnothing}_{\Buchi,\GW}(\sigma_1)=\{\langle\rangle\}$.
		That is, an adversary can infer the complete information.
	\item If $\Knw=\PG$,
		then an adversary can deduce that $\langle v_2 \rangle$ does not belong to 
		$\Obj^{1,\varnothing}_{\Buchi,\PG}(\sigma_1)$ because player $1$ did not take $\sigma_2$
		to pass through $v_2$ infinitely often.
		That is, if $\langle v_2 \rangle$ were the objective of player $1$, 
		then it meant she chose losing strategy $\sigma_1$ instead of winning strategy $\sigma_2$, 
		which is unlikely to happen.
		Therefore, we have 
		$\Obj^{1,\varnothing}_{\Buchi,\PG}(\sigma_1)=
			\{\langle \rangle, \langle v_1 \rangle, \langle v_1,v_2 \rangle\}$.
	\item If $\Knw=\PGW$,
		we have 
		\[
			\Obj^{1,\varnothing}_{\Buchi,\PGW}(\sigma_1)=
			\bigcap_{\Knw\in\{\PW,\GW,\PG\}}\Obj^{1,\varnothing}_{\Buchi,\Knw}(\sigma_1)
			= \{ \langle \rangle \}.
		\]
\end{itemize}
\end{example}

	\paragraph*{$\O$-indistinguishable strategy}
	\begin{definition}
	Let $\G=\Arena$ be a game arena, $\sigma_p \in \Sigma^p$ be a strategy of $p\in P$, 
	$\Omega\subseteq2^\Play$ be one of the classes of objectives defined in Definition \ref{def:obj},
	$O_p \in \Omega$ be an objective of $p$ 
	and $\Knw \in \{\PW,\GW,\PG,\PGW\}$ be a type of information that an adversary can use.
	For any set $\calO\subseteq2^\Play$ of objectives such that 
	$\calO \subseteq \bigcap_{\bmsigma \in \Sigma}\Objpok(\bmsigma[p\mapsto \sigma_p])$,
	we call $\sigma_p$ an \emph{$\calO$-indistinguishable strategy} ($\calO$-IS) of $p$ (for $O_p$ and $\Knw$).
\end{definition}
Intuitively, when a player takes an $\calO$-IS as her strategy, 
an adversary cannot narrow down the set of candidates of $p$'s objective from $\calO$
by the following reason.
By definition, any objective $O$ belonging to $\calO$ also belongs to $\Objpok(\bmsigma[p \mapsto \sigma_p])$ 
for the combination of $\sigma_p$ and any strategies of the players other than $p$.
This means that such an objective $O$ is possible as the objective of $p$ 
from the viewpoint of the adversary who can use a type of information specified by $\Knw$.
If an $\calO$-IS $\sigma_p\in\Sigma^p$ is a winning strategy of $p$, 
then we call $\sigma_p$ a \emph{winning $\calO$-IS} of $p$.

\begin{example}
	Figure \ref{fig:O-IS} shows a $1$-player game arena $\G=(\{1\},V,(V),v_0,E)$ where
	$V=\{v_0,v_1,v_2\}$ and $E= \{ (v_0,v_0),(v_0,v_1),(v_1,v_0),(v_1,v_2),(v_2,v_0)  \}$.
	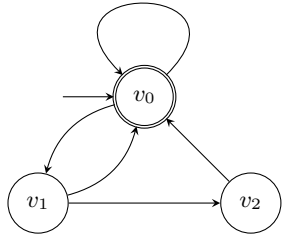
\begin{figure}[t]
		\centering
		\begin{tikzpicture}[everynode/.style={circle,draw,minimum size=0.8cm},>=stealth,node distance=2cm]
			\node[everynode,double] (v0) {$v_0$};
			\node[left of=v0,node distance=1.2cm] (start) {};
			\node[everynode,below left of=v0] (v1) {$v_1$};
			\node[everynode,below right of=v0] (v2) {$v_2$};
			\draw[->,loop] (v0) to (v0);
			\draw[->] (v0) to [bend right](v1);
			\draw[->] (v1) to [bend right] (v0);
			\draw[->] (v1) to (v2);
			\draw[->] (v2) to (v0);
			\draw[->] (start) to (v0);
		\end{tikzpicture}
			\caption{$1$-player game arena with B\"{u}chi objectives}
			\label{fig:O-IS}
	\end{figure}
	We use the same notation of B\"{u}chi objectives as Example \ref{exm:obj}, and
	in this example the objective of player $1$ is $\langle v_0 \rangle\subseteq\Play$.
	We assume that an adversary knows that the objective of player $1$ is a B\"{u}chi objective.
	In this example, we focus on $\Knw=\PW$.
	We examine the following three strategies of player $1$, all of which result in player $1$'s winning.
	\begin{itemize}
		\item Let $\sigma_1\in\Sigma^1$ be a strategy of player $1$ such that $\out(\sigma_1)=v_0v_0v_0\cdots$.
			Since player $1$ wins, an adversary can deduce that $v_0$ must be an accepting state. 
			Therefore,
			$\Obj^{1,\langle v_0 \rangle}_{\Buchi,\PW}(\sigma_1)
			  =\{\langle v_0 \rangle,\langle v_0,v_1\rangle,\langle v_0,v_2 \rangle,\langle v_0,v_1,v_2 \rangle\}$.
			For all $\calO\subseteq\Obj^{1,\langle v_0 \rangle}_{\Buchi,\PW}(\sigma_1)$,
			$\sigma_1$ is an $\calO$-IS (for $\langle v_0 \rangle$ and $\Knw=\PW$).
		\item Let $\sigma_2\in\Sigma^1$ be a strategy of player $1$ such that $\out(\sigma_1)=v_0v_1v_0v_1\cdots$.
			In a similar way as the above case, an adversary can deduce that $v_0$ or $v_1$ (or both) must be an accepting state. 
			Therefore,
			$
				\Obj^{1,\langle v_0 \rangle}_{\Buchi,\PW}(\sigma_2)
			  =\{\langle v_0 \rangle$, $\langle v_1 \rangle$,
			   $\langle v_0,v_1\rangle$, $\langle v_1,v_2 \rangle$, $\langle v_2,v_0 \rangle,
			   \langle v_0,v_1,v_2 \rangle\}.
		   $
			For all $\calO\subseteq\Obj^{1,\langle v_0 \rangle}_{\Buchi,\PW}(\sigma_2)$,
			$\sigma_2$ is an $\calO$-IS. 
		\item Let $\sigma_3\in\Sigma^1$ be a strategy of player $1$ such that $\out(\sigma_3)=v_0v_1v_2v_0v_1v_2\cdots$.
			In a similar way as the above cases, 
			an adversary can deduce that at least one of $v_0$, $v_1$, and $v_2$ must be an accepting state. 
			Therefore,
			$\Obj^{1,\langle v_0 \rangle}_{\Buchi,\PW}(\sigma_3)
			  =\{\langle v_0 \rangle,\langle v_1 \rangle, \langle v_2 \rangle,
			   \langle v_0,v_1\rangle,\langle v_1,v_2 \rangle, \langle v_2,v_0 \rangle,
			   \langle v_0,v_1,v_2 \rangle\}$.
			For all $\calO\subseteq\Obj^{1,\langle v_0 \rangle}_{\Buchi,\PW}(\sigma_3)$,
			$\sigma_3$ is an $\calO$-IS. 
	\end{itemize}
	In the above example, 
	$\Obj^{1,\langle v_0 \rangle}_{\Buchi,\PW}(\sigma_1) 
		\subset \Obj^{1,\langle v_0 \rangle}_{\Buchi,\PW}(\sigma_2)
		\subset \Obj^{1,\langle v_0 \rangle}_{\Buchi,\PW}(\sigma_3)$.
	Hence, the strategy $\sigma_3$ is the most favorable one for player $1$
	with regard to her privacy protection.
	This observation motivates us to introduce a new concept of equilibrium defined below.
\end{example}

	\paragraph*{Objective-indistinguishability equilibrium}
	\begin{definition}\label{dfn:OIE}
	Let $(O_p)_{p\in P}$ be an objective profile and 
	$\Knw \in \{\PW, \GW, \PG, \PGW\}$ be a type of information that an adversary can use.
We call a strategy profile $\bmsigma \in \Sigma$ such that
\begin{equation}
	\forall p\in P. \ \forall\sigma_p \in \Sigma^p. \ \Objpk(\bmsigma[p\mapsto\sigma_p])\subseteq\Objpk(\bmsigma)
	\label{eqn:OIE}
\end{equation}
an \emph{objective-indistinguishability equilibrium} (OIE) for $\Knw$.
\end{definition}
If a strategy profile $\bmsigma$ is an OIE for $\Knw$, no player can expand her $\Objpk(\bmsigma)$
by changing her strategy alone.
For a strategy profile $\bmsigma\in\Sigma$, 
we call a strategy $\sigma_p\in\Sigma^p$ such that 
$\Objpok(\bmsigma[p\mapsto\sigma_p])\not\subseteq\Objpok(\bmsigma)$ a profitable deviation for OIE.
If an OIE $\bmsigma$ is an NE as well, we call $\bmsigma$ an \emph{objective-indistinguishability Nash equilibrium} (OINE).
\begin{example}
	Figure \ref{fig:OIE} shows a $3$-player game arena $\G=(P,V,(V_p)_{p\in P},v_0,E)$
	where $P=\{0,1,2\}$, $V=\{v_0,v_1,v_2\}$, $V_p=\{v_p\} \ (p\in P)$ and
	$E= \{ (v_i,v_j) \mid i,j\in P,i\neq j \}$.
	\begin{figure}[t]
		\centering
		\begin{tikzpicture}[everynode/.style={circle,draw,minimum size=0.8cm},>=stealth,node distance=2.5cm]
			\node[everynode] (v0) {$v_0$};
			\node[left of=v0,node distance=1.2cm] (start) {};
			\node[everynode,below left of=v0] (v1) {$v_1$};
			\node[everynode,below right of=v0] (v2) {$v_2$};
			\draw[->] (v0) to [bend right=15] (v1);
			\draw[->] (v1) to [bend right=15] (v2);
			\draw[->] (v2) to [bend right=15] (v0);
			\draw[->] (v0) to [bend right=15] (v2);
			\draw[->] (v1) to [bend right=15] (v0);
			\draw[->] (v2) to [bend right=15] (v1);
			\draw[->] (start) to (v0);
		\end{tikzpicture}
			\caption{$3$-player game arena with B\"{u}chi objectives}
			\label{fig:OIE}
	\end{figure}
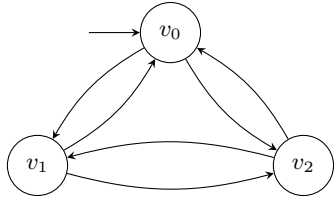
	The objective of player $p\in P$ is $\langle v_p \rangle$, and hence
	the objective profile is $\bmalpha=(\langle v_0 \rangle,\langle v_1 \rangle, \langle v_2 \rangle)$.
	Let $\sigma_p\in\Sigma^p\ (p\in P)$ be the strategies defined as follows:
	$\sigma_0(hv_0) = v_1$, $\sigma_1(hv_1) = v_0$, and
	$\sigma_2(hv_2) = v_0$ for every $h\in \Hist \cup \{\varepsilon\}$.
	Let $\bmsigma= (\sigma_1,\sigma_2,\sigma_3)$.
	It holds that $\out(\bmsigma)=v_0v_1v_0v_1\cdots$ and $\Win(\bmsigma,\bmalpha)=\{0,1\}$.
	\begin{itemize}
		\item For $\Knw=\PW$, $\bmsigma$ is not an OIE 
			  because there exists a profitable deviation $\sigma'_1\in\Sigma^1$ for OIE
			  such that $\sigma'_1(h)=v_2$ for all $h\in\Hist_1$.
			  While $\out(\bmsigma)$ does not visit $v_2$, 
			  player $1$ can make the outcome visit $v_2$ infinitely often 
			  by changing her strategy from $\sigma_1$ to $\sigma'_1$. 
			  As a result,
			  $\Obj^{1,\langle v_1 \rangle}_{\Buchi,\PW}(\bmsigma)=
			  \{ \langle v_0 \rangle, \langle v_1 \rangle, \langle v_0,v_1 \rangle, \langle v_1,v_2 \rangle,
			  \langle v_2,v_0 \rangle, \langle v_0,v_1,v_2 \rangle\}$ and
			  $\Obj^{1,\langle v_1 \rangle}_{\Buchi,\PW}(\bmsigma[1\mapsto\sigma'_1])
			  =\Obj^{1,\langle v_1 \rangle}_{\Buchi,\PW}(\bmsigma) \cup \{ \langle v_2 \rangle\}$.
		\item 
			  For $\Knw = \GW$, $\bmsigma$ is an OIE by the following reason: 
			  In general, when $\Knw = \GW$, 
			  by definition 
			  $\Obj^{p,O_p}_{\Omega,\GW}(\bmsigma) = \Omega\setminus \{\varnothing\}$ if $p$ wins 
			  and $\Obj^{p,O_p}_{\Omega,\GW}(\bmsigma) = \overline{\Winnable^p}$ otherwise. 
			  (That is, an adversary cannot exclude any objective other than $\varnothing$ 
			  from candidate objectives of player $p$ when $p$ wins, 
			  while he can exclude objectives in $\Winnable^p$ when $p$ loses.) 
			  In this example, 
			  $\Obj^{0,\langle v_0 \rangle}_{\Omega,\GW}(\bmsigma) 
				= \Obj^{1,\langle v_1 \rangle}_{\Omega,\GW}(\bmsigma) = \Omega \setminus \{\varnothing\}$ 
			  since players $0$ and $1$ are winners. 
			  They have no profitable deviation for OIE,
			  because each of them cannot become a loser unless other players change their strategies
			  and thus 
			  $\Obj^{p,\langle v_p \rangle}_{\Omega,\GW}(\bmsigma[p\mapsto\sigma'_p]) \ (p\in\{0,1\})$ 
			  still equals $\Omega \setminus \{\varnothing\}$ 
			  for any strategy $\sigma'_p \ (p\in\{0,1\})$.
			  For player~$2$, 
			  $\Obj^{2,\langle v_2 \rangle}_{\Omega,\GW}(\bmsigma) 
			    = \overline{\Winnable^2}$ ($= \{ \langle \rangle, \langle v_2 \rangle \})$.\footnote{%
				  In this example, player $2$ can visit $v_i \ (i=0,1)$ infinitely often
				  by choosing $v_i$ as the next vertex at $v_2$.
				  Therefore, an objective such that $v_0$ or $v_1$ is an accepting state is winnable
				  and hence $\Winnable^2 = \Omega \setminus \{ \langle \rangle, \langle v_2 \rangle \}$.
			  }
			  She also has no profitable deviation for OIE, 
			  because she cannot become a winner unless player $0$ or $1$ changes their strategies 
			  and thus $\Obj^{2,\langle v_2 \rangle}_{\Omega,\GW}(\bmsigma[2\mapsto \sigma'_2])$ still equals 
			  $\overline{\Winnable^2}$ for any her strategy $\sigma'_2$.
		\item For $\Knw=\PG$, $\bmsigma$ is not an OIE because for $\sigma'_1\in\Sigma^1$ defined above,
			  $\Obj^{1,\langle v_1 \rangle}_{\Buchi,\PG}(\bmsigma)=
			  \{ \langle\rangle,\langle v_0 \rangle, \langle v_1 \rangle, \langle v_0,v_1 \rangle, \langle v_1,v_2 \rangle,
			  \langle v_2,v_0 \rangle, \langle v_0,v_1,v_2 \rangle\}$ and 
			  $\Obj^{1,\langle v_1 \rangle}_{\Buchi,\PG}(\bmsigma[1\mapsto\sigma'_1])
			  =\Obj^{1,\langle v_1 \rangle}_{\Buchi,\PG}(\bmsigma) \cup \{ \langle v_2 \rangle\}$.
		\item For $\Knw=\PGW$, $\bmsigma$ is not an OIE because 
			  $\sigma'_1\in\Sigma^1$ is again a profitable deviation for OIE.
	\end{itemize}
\end{example}

\section{Decidability Results}\label{sec:results}
	\begin{theorem}
  \label{th:existence-of-calO-IS}
  Let $\calG = (P, V, (V_p)_{p\in P}, v_0, E)$ be a game arena
  and $\balpha = (O_p)_{p\in P}$ be an objective profile
  over Muller objectives.
  For a subset $\calO \subseteq \powerset{\Play}$ of
  Muller objectives,
  whether there exists an $\calO$-IS of $p$
  for $O_p$ is decidable.
  Moreover, the problem is decidable in polynomial time
  when $\Knw=\PG$ or when $\Knw=\GW$ and $\calO$ does not contain $\varnothing$.
\end{theorem}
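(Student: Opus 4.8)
The plan is to show, for each of the four information types $\Knw \in \{\PW,\GW,\PG,\PGW\}$, that the existence of an $\calO$-IS of $p$ for $O_p$ is equivalent to a fixed Boolean combination of winnability assertions of the form $O' \in \Winnable^p$, where each $O'$ is a Muller objective computable from the input, and then to appeal to Theorem~\ref{thm:Muller}. Two elementary facts drive the unfolding of the definition of $\Objpok$. First, for a fixed strategy $\sigma_p$ of $p$, letting the other players vary produces exactly the consistent plays of $\sigma_p$, i.e.\ $\{\,\out(\bmsigma[p\mapsto\sigma_p]) \mid \bmsigma\in\Sigma\,\} = \out^p(\sigma_p)$; hence $\sigma_p$ is an $\calO$-IS iff, for every $O\in\calO$ and every $\rho\in\out^p(\sigma_p)$, the defining condition of $O\in\Objpok(\bmsigma[p\mapsto\sigma_p])$ holds with $\out(\bmsigma)$ read as $\rho$ (so that $p$ is a winner exactly when $\rho\in O_p$). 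Second, Muller objectives are closed under Boolean operations with the obvious effect on representing families: writing $O_p=\Muller(\calF_p)$ and $O=\Muller(\calF_O)$ for $O\in\calO$, we have $\Muller(\calF)\cap\Muller(\calF')=\Muller(\calF\cap\calF')$, $\Muller(\calF)\cup\Muller(\calF')=\Muller(\calF\cup\calF')$, $\overline{\Muller(\calF)}=\Muller(\powerset{V}\setminus\calF)$ and $\Muller(\calF)\mathbin{\triangle}\Muller(\calF')=\Muller(\calF\mathbin{\triangle}\calF')$; moreover $\varnothing=\Muller(\varnothing)$ is never in $\Winnable^p$, since every strategy has at least one consistent play.

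Unfolding case by case then yields the following characterisations. For $\Knw=\PG$: $\sigma_p$ is an $\calO$-IS iff it is a winning strategy for $\bigcap\{\,O\in\calO \mid O\in\Winnable^p\,\}$ (read as $\Play$ when that set is empty), so an $\calO$-IS exists iff $\Muller\bigl(\bigcap\{\,\calF_O \mid O\in\calO\cap\Winnable^p\,\}\bigr)\in\Winnable^p$. For $\Knw=\PW$: $\sigma_p$ is an $\calO$-IS iff $\out^p(\sigma_p)\cap\bigcup_{O\in\calO}(O\mathbin{\triangle}O_p)=\varnothing$, i.e.\ iff $\sigma_p$ is winning for $\Muller(\powerset{V}\setminus\mathcal{H})$ where $\mathcal{H}:=\bigcup_{O\in\calO}(\calF_O\mathbin{\triangle}\calF_p)$, so an $\calO$-IS exists iff $\Muller(\powerset{V}\setminus\mathcal{H})\in\Winnable^p$. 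For $\Knw=\PGW$: if $\calO$ meets both $\Winnable^p$ and its complement there is no $\calO$-IS; if $\varnothing\neq\calO\subseteq\Winnable^p$, an $\calO$-IS exists iff $\Muller\bigl(\calF_p\cap\bigcap_{O\in\calO}\calF_O\bigr)\in\Winnable^p$; and if $\calO\cap\Winnable^p=\varnothing$, the condition coincides with the $\PW$ case. Finally, for $\Knw=\GW$ the candidate set depends on whether $\sigma_p$ makes $p$ a winner, so one splits on the three regimes of $\sigma_p$ (all consistent plays winning / some winning and some not / all losing) and on whether $\varnothing\in\calO$; this gives: an $\calO$-IS exists iff ($\varnothing\notin\calO$ and [$\calO\cap\Winnable^p=\varnothing$ or $O_p\in\Winnable^p$]) or ($\varnothing\in\calO$ and $\calO\cap\Winnable^p=\varnothing$ and $\overline{O_p}\in\Winnable^p$).

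Each characterisation is a Boolean combination of at most $|\calO|+1$ assertions $O'\in\Winnable^p$, each with $O'$ an explicitly constructed Muller objective, so by Theorem~\ref{thm:Muller} the existence of an $\calO$-IS is decidable. For the complexity refinement: when $\Knw=\PG$, or when $\Knw=\GW$ and $\varnothing\notin\calO$, every queried $O'$ is either one of $O_p$ and the members of $\calO$ or an intersection of some of their representing families, hence has a representing family of size at most $|\calF_p|+\sum_{O\in\calO}|\calF_O|$, which is polynomial in the input; since Theorem~\ref{thm:Muller} gives a $\mathsf{P}$ bound in the description length, the whole procedure runs in polynomial time. In the remaining situations ($\Knw=\PW$; $\Knw=\PGW$ with $\calO\cap\Winnable^p=\varnothing$; $\Knw=\GW$ with $\varnothing\in\calO$) we instead query winnability of a \emph{complemented} Muller objective, $\Muller(\powerset{V}\setminus\mathcal{H})$ or $\overline{O_p}$, whose representing family may have size $2^{|V|}$; Theorem~\ref{thm:Muller} still applies and gives decidability, but not obviously polynomial time.

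The step I expect to be the main obstacle is the $\GW$ case. In the other three cases $\Objpok(\bmsigma[p\mapsto\sigma_p])$ is, once $\sigma_p$ is fixed, simply the collection of $O$ for which $\out^p(\sigma_p)$ is contained in a set built from $O$, $O_p$ and $\Winnable^p$; under $\GW$ it genuinely bifurcates according to whether $\sigma_p$ makes $p$ a winner, so one must argue that the three behavioural regimes of $\sigma_p$ — together with the special role of $\varnothing$, which is never winnable yet whose membership in the candidate set forces $p$ to be a loser — exhaust the cases and pin down exactly when some $\sigma_p$ yields a candidate set containing all of $\calO$. A secondary subtlety, relevant to the scope of the polynomial-time claim, is that complementing a Muller objective can enlarge its representation exponentially, which is why the polynomial bound cannot be extended to $\PW$ or $\PGW$ by this argument alone.
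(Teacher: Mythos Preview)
Your proposal is correct and follows essentially the same route as the paper: both arguments rest on the identity $\{\out(\bmsigma[p\mapsto\sigma_p])\mid\bmsigma\in\Sigma\}=\out^p(\sigma_p)$ to rewrite the $\calO$-IS condition as $\out^p(\sigma_p)\subseteq T_\Knw$ for an explicitly constructed Muller objective $T_\Knw$, and then invoke Theorem~\ref{thm:Muller} together with closure of Muller objectives under Boolean operations; the polynomial-time refinement in both is exactly the observation that for $\PG$ and for $\GW$ with $\varnothing\notin\calO$ only intersections (no complements) are needed. The only difference is presentational: the paper writes down a single target objective for each $\Knw$ (its Conditions~(\ref{eqn:calO-IS})--(\ref{eqn:calO-IS-PG})), whereas you further split $\PGW$ and $\GW$ into subcases, which is equivalent but slightly more verbose.
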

\begin{proof}
  First we consider the case where $\Knw = \PGW$.
  We can show that
  a strategy $\sigma_p\in \Sigma^p$ is an $\calO$-IS
  of $p$ for $O_p$,
  i.e.\ $\calO \subseteq
  \bigcap_{\bsigma\in\Sigma}
    \Obj^{p,O_p}_{\PGW}(\bsigma[p\mapsto\sigma_p])$,
  if and only if
  \begin{equation}
    \Out^p(\sigma_p)\subseteq
    \bigcap_{O\in\calO}
      \left((O \cap O_p) \cup (\overline{O} \cap \overline{O_p})\right)
    \cap \bigcap_{O\in\calO\cap\Winnable^p} (O \cap O_p).
    \label{eqn:calO-IS}
  \end{equation}
  This can be shown as follows:\footnote{%
    We have confirmed this equivalence using
    a proof assistant software Coq.
    The proof script is available at
    \url{https://github.com/ytakata69/proof-indistinguishable-objectives}.
  }
  Assume that
  $\calO \subseteq
  \bigcap_{\bsigma\in\Sigma}
    \Obj^{p,O_p}_{\PGW}(\bsigma[p\mapsto\sigma_p])$.
  Then, every $O\in\calO$ should belong to
  $\Obj^{p,O_p}_{\PGW}(\bsigma[p\mapsto\sigma_p])$
  for every $\bsigma\in\Sigma$.
  Then by the definition of $\Obj^{p,O_p}_{\PGW}$,
  every $O\in\calO$ and every $\bsigma\in\Sigma$
  should satisfy
  $\Out(\bsigma[p\mapsto\sigma_p])\in
    (O\cap O_p)\cup(\overline{O}\cap\overline{O_p})$ and
  whenever $O\in\Winnable^p$,
  $\Out(\bsigma[p\mapsto\sigma_p])\in
    O\cap O_p$.
  Because 
  $\Out^p(\sigma_p) = \{ \Out(\bsigma[p\mapsto\sigma_p]) \mid
    \bsigma\in\Sigma \}$,
  we have Condition~(\ref{eqn:calO-IS}).
  The reverse direction can be proved similarly.

  Condition~(\ref{eqn:calO-IS}) means that
  $\sigma_p$ is a winning strategy of $p$
  for the objective equal to the right-hand side
  of the containment in Condition~(\ref{eqn:calO-IS}).
  Because the class of Muller objectives is closed under
  Boolean operations,
  the right-hand side of Condition~(\ref{eqn:calO-IS}) is
  also a Muller objective.
  Since 
  deciding the existence of a winning strategy
  for a Muller objective is decidable as stated in Theorem \ref{thm:Muller},
  deciding the existence of an $\calO$-IS is also
  decidable.
  (In this computation,
  deciding the existence of a winning strategy is
  used both for deciding whether $O\in\Winnable^p$, i.e.,\
  $O$ has a winning strategy, and for deciding whether
  the right-hand side of Condition~(\ref{eqn:calO-IS}) has
  a winning strategy.)

  For the other cases, 
  we can similarly show that $\sigma_p\in \Sigma^p$ is
  an $\calO$-IS of $p$ for $O_p$ if and only if
  the following conditions
  (\ref{eqn:calO-IS-PW}),
  (\ref{eqn:calO-IS-GW}), and
  (\ref{eqn:calO-IS-PG}) hold
  when $\Knw = \PW,\GW,\PG$, respectively:
  \begin{align}
    \Out^p(\sigma^p) &\subseteq
    \bigcap_{O\in\calO}
      \left((O \cap O_p) \cup (\overline{O} \cap \overline{O_p})\right),
    \label{eqn:calO-IS-PW}
    \\
    \Out^p(\sigma^p) &\subseteq
    \bigcap_{O\in\calO\cap\Winnable^p} O_p \cap
	  \bigcap_{O\in\calO\cap\{\varnothing\}} \overline{O_p},
    \label{eqn:calO-IS-GW}
    \\
    \Out^p(\sigma^p) &\subseteq
    \bigcap_{O\in\calO\cap\Winnable^p} O.
    \label{eqn:calO-IS-PG}
  \end{align}
  Therefore in any cases,
  we can reduce the problem of deciding
  the existence of an $\calO$-IS into
  the one deciding the existence of
  a winning strategy for a Muller objective.

  Since
  $\Muller(\calF_1)\cap\Muller(\calF_2)=\Muller(\calF_1\cap\calF_2)$,
  the description lengths of the right-hand sides of Condition
  (\ref{eqn:calO-IS-PG}) and 
  Condition (\ref{eqn:calO-IS-GW}) with $\calO$ not containing $\varnothing$
  are not greater than the sum of those of $\calO$ and $O_p$.\footnote{%
    As an exception, if $\calO\cap\Winnable^p=\emptyset$
	(resp. $\calO \cap (\Winnable^p\cup \{ \varnothing \}) = \varnothing$),
    then the right-hand side of
	Condition (\ref{eqn:calO-IS-PG}) (resp. (\ref{eqn:calO-IS-GW}))
    equals the set of all plays,
    which equals $\Muller(2^V)$.
    In these cases,
    every strategy satisfies
    Conditions (\ref{eqn:calO-IS-GW}) and (\ref{eqn:calO-IS-PG})
    and thus we can trivially decide the existence of
    an $\calO$-IS\@.
  }
  Since
  deciding the existence of a winning strategy
  for a Muller objective is solvable in polynomial time by Theorem \ref{thm:Muller},
  deciding the existence of an $\calO$-IS
  when $\Knw=\PG$ or when $\Knw=\GW$ and $\calO$ does not contain $\varnothing$ is also solvable in polynomial time.
\qed
\end{proof}

When $\Knw=\PGW$ or $\PW$,
we cannot guarantee that
deciding the existence of an $\calO$-IS is
solvable in polynomial time
because
the complementation of a Muller objective
in the right-hand sides of
Conditions (\ref{eqn:calO-IS}) and (\ref{eqn:calO-IS-PW})
may make the description length of the resultant objective
$O(|V|\cdot 2^{|V|})$
even when the description lengths of
$\calO$ and $\balpha$ are small.
Similarly, when $\Knw=\GW$, $\calO\cap \Winnable^p=\varnothing$ and $\varnothing\in\calO$,
we cannot guarantee that
deciding the existence of an $\calO$-IS is solvable in polynomial time
because
the right-hand side of Condition (\ref{eqn:calO-IS-GW}) becomes $\overline{O_p}$.

\begin{theorem}
  Let $\calG = (P, V, (V_p)_{p\in P}, v_0, E)$ be a game arena
  and $\balpha = (O_p)_{p\in P}$ be an objective profile
  over Muller objectives.
  For a subset $\calO \subseteq \powerset{\Play}$ of
  Muller objectives,
  whether there exists a winning $\calO$-IS of $p$
  for $O_p$ is decidable in polynomial time.
\end{theorem}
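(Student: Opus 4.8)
The plan is to reuse the characterizations of $\calO$-IS obtained in the proof of Theorem~\ref{th:existence-of-calO-IS}, namely Conditions~(\ref{eqn:calO-IS}), (\ref{eqn:calO-IS-PW}), (\ref{eqn:calO-IS-GW}) and (\ref{eqn:calO-IS-PG}), and to observe that imposing additionally that $\sigma_p$ be a \emph{winning} strategy, i.e.\ $\Out^p(\sigma_p)\subseteq O_p$, makes every one of these conditions collapse to the statement that $\sigma_p$ is a winning strategy for a single Muller objective whose description length stays polynomial. The point is that the ``symmetric difference'' objectives $(O\cap O_p)\cup(\overline O\cap\overline{O_p})$, which are responsible for the potential blow-up mentioned after Theorem~\ref{th:existence-of-calO-IS}, simplify drastically once $O_p$ already contains the outcome.

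Concretely, the first step is the elementary observation that, under the hypothesis $\Out^p(\sigma_p)\subseteq O_p$, a containment $\Out^p(\sigma_p)\subseteq(O\cap O_p)\cup(\overline O\cap\overline{O_p})$ is equivalent to $\Out^p(\sigma_p)\subseteq O$ (a play lying in $O_p$ cannot lie in $\overline{O_p}$, so it is forced into the first disjunct), and likewise $\Out^p(\sigma_p)\subseteq O\cap O_p$ is equivalent to $\Out^p(\sigma_p)\subseteq O$. Substituting this into Conditions~(\ref{eqn:calO-IS})--(\ref{eqn:calO-IS-PG}), I obtain that a winning $\sigma_p$ is an $\calO$-IS of $p$ for $O_p$ if and only if:
\begin{itemize}
\item $\Out^p(\sigma_p)\subseteq O_p\cap\bigcap_{O\in\calO}O$, when $\Knw\in\{\PW,\PGW\}$ (for $\PGW$ the extra factor ranging over $\calO\cap\Winnable^p$ in Condition~(\ref{eqn:calO-IS}) is subsumed by the one ranging over all of $\calO$);
\item $\Out^p(\sigma_p)\subseteq O_p\cap\bigcap_{O\in\calO\cap\Winnable^p}O$, when $\Knw=\PG$;
\item $\Out^p(\sigma_p)\subseteq O_p$, when $\Knw=\GW$ and $\varnothing\notin\calO$; and no winning $\calO$-IS exists when $\Knw=\GW$ and $\varnothing\in\calO$, since Condition~(\ref{eqn:calO-IS-GW}) then forces $\Out^p(\sigma_p)\subseteq\overline{O_p}$, incompatible with $\Out^p(\sigma_p)\subseteq O_p$ because $\Out^p(\sigma_p)$ is always nonempty.
\end{itemize}
Thus in every case deciding the existence of a winning $\calO$-IS reduces, after at most the trivial test $\varnothing\in\calO$, to deciding the existence of a winning strategy of $p$ for the explicitly constructed Muller objective $M_\Knw$ equal to $O_p\cap\bigcap_{O\in\calO}O$, or $O_p\cap\bigcap_{O\in\calO\cap\Winnable^p}O$, or $O_p$, respectively.

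The remaining step is the complexity bound. Since $\Muller(\calF_1)\cap\Muller(\calF_2)=\Muller(\calF_1\cap\calF_2)$, each $M_\Knw$ is of the form $\Muller(\calF)$ where $\calF$ is an intersection of the family defining $O_p$ with (some of) the families defining the objectives in $\calO$; in particular $\calF$ is contained in the family defining $O_p$, so the description length of $M_\Knw$ is at most that of $O_p$, and crucially no complementation is ever performed. Computing $\calO\cap\Winnable^p$ (needed only for $\Knw=\PG$) amounts to $|\calO|$ calls to the polynomial-time winning-strategy decision procedure of Theorem~\ref{thm:Muller}, one for each $O\in\calO$; the final reduction is one more such call on the polynomial-size objective $M_\Knw$. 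Hence the whole procedure runs in polynomial time.

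The only slightly delicate points I expect are carrying out the ``winning eliminates complementation'' simplification cleanly for each value of $\Knw$ (the four conditions look different but all collapse the same way) and the $\GW$ edge case $\varnothing\in\calO$, which rests on the fact that $\Out^p(\sigma_p)\neq\varnothing$ for every strategy $\sigma_p$; neither is a genuine obstacle, and the substance of the proof is precisely the recognition that the winning requirement turns the awkward symmetric-difference Muller objectives of Conditions~(\ref{eqn:calO-IS}) and~(\ref{eqn:calO-IS-PW}) into plain intersections, which is exactly what the non-winning version of the problem lacks.
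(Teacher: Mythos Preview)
Your proposal is correct and follows essentially the same approach as the paper: intersecting the right-hand sides of Conditions~(\ref{eqn:calO-IS})--(\ref{eqn:calO-IS-PG}) with $O_p$, observing that the $(O\cap O_p)\cup(\overline O\cap\overline{O_p})$ terms collapse to $O$ so that no complementation is needed, and then invoking Theorem~\ref{thm:Muller} on the resulting polynomial-size Muller objective. Your treatment of the $\GW$ edge case $\varnothing\in\calO$ (concluding directly that no winning $\calO$-IS exists) is equivalent to the paper's formulation, which writes the right-hand side as $O_p\cap\overline{O_p}=\varnothing$ and reduces to the trivially negative winning-strategy question for the empty objective.
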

\begin{proof}
  By definition, $\sigma_p\in\Sigma^p$ is a winning strategy of $p$
  for $O_p$ if and only if $\Out^p(\sigma_p)\subseteq O_p$.
  Therefore, by replacing the right-hand side of each of
  Conditions (\ref{eqn:calO-IS})--(\ref{eqn:calO-IS-PG}) with
  the intersection of it and $O_p$,
  we can decide the existence of a winning $\calO$-IS
  in the same way as the proof of Theorem~\ref{th:existence-of-calO-IS}.
  Namely, $\sigma_p$ is a winning $\calO$-IS of $p$ for $O_p$
  if and only if
  \begin{align}
    \Out^p(\sigma^p) &\subseteq
    O_p \cap \bigcap_{O\in\calO} O
	  && \text{(when $\Knw=\PGW$ or $\PW$)}, \label{eqn:WO-IS-proof-PGWorPW}
    \\
    \Out^p(\sigma^p) &\subseteq
	  O_p \cap \bigcap_{O\in \calO\cap\{\varnothing\}} \overline{O_p} && \text{(when $\Knw=\GW$)},\label{eqn:WO-IS-proof-GW}
    \\
    \Out^p(\sigma^p) &\subseteq
    O_p \cap \bigcap_{O\in\calO\cap\Winnable^p} O
	  && \text{(when $\Knw=\PG$)}.\label{eqn:WO-IS-proof-PG}
  \end{align}

  When $\Knw=\PGW,\PW$ or $\PG$,
  since the right-hand sides of Conditions (\ref{eqn:WO-IS-proof-PGWorPW}) and (\ref{eqn:WO-IS-proof-PG})
  do not require complementation,
  the description lengths of them are not greater than
  the sum of the description lengths of $\calO$ and $O_p$.
	When $\Knw=\GW$, the right-hand side of Condition (\ref{eqn:WO-IS-proof-GW})
	is $O_p \cap \overline{O_p}=\varnothing$ if $\varnothing\in\calO$, and
	$O_p$ otherwise,
	and hence the description length of it is not greater than the description length of $O_p$.
  Therefore,
  in the same way as the cases where $\Knw=\PG$ or $\Knw=\GW$ and $\varnothing\notin\calO$
  in Theorem~\ref{th:existence-of-calO-IS},
  deciding the existence of a winning $\calO$-IS
  is also solvable in polynomial time
  for any $\Knw\in\{\PW,\GW,\PG,\PGW\}$.
\qed
\end{proof}

	\begin{theorem}\label{thm:OIE}
	For a game arena $\G$ and an objective profile $\bmalpha=(O_p)_{p\in P}$ over Muller objectives, 
	whether there exists an OIE for $\G$ and $\bmalpha$ is decidable.
\end{theorem}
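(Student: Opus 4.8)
\emph{Proof proposal.}
The plan is to exploit the fact that $\Objpk(\bsigma)$ depends on $\bsigma$ only through $\out(\bsigma)$ and, since $O_p$ and every objective in the class $\Omega$ the adversary considers are Muller objectives (we take $\Omega$ to be the class of all Muller objectives), only through $\Inf(\out(\bsigma))$. Hence $\Objpk$ takes at most $2^{|V|}$ distinct values, one per $I\subseteq V$, and a case analysis on $\Knw$ puts each of them in closed form: for $\Knw=\PW$ the value is $\{O\in\Omega\mid\Inf(\out(\bsigma))\in O\}$ or its complement in $\Omega$, according as $p$ wins or loses; for $\Knw=\GW$ it is $\Omega\setminus\{\emptyset\}$ or $\Omega\setminus\Winnable^p$; for $\Knw=\PG$ it is $\{O\in\Omega\mid O\in\Winnable^p\Rightarrow\Inf(\out(\bsigma))\in O\}$; and for $\Knw=\PGW$ it is the intersection of the three. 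Define a preorder on $2^V$ by $I\preceq_p I'$ iff the value of $\Objpk$ at $I$ is contained in its value at $I'$. This preorder is computable: deciding $O\in\Winnable^p$ for a Muller objective $O$ is decidable by Theorem~\ref{thm:Muller}, and since $\{\mathcal F\mid\Muller(\mathcal F)\in\Winnable^p\}$ is upward closed (a winning strategy for $\Muller(\mathcal F)$ is one for every larger $\Muller(\mathcal F')$), each comparison reduces to finitely many such queries. Writing $\mathcal R_p(\bsigma)=\{\out(\bsigma[p\mapsto\sigma_p])\mid\sigma_p\in\Sigma^p\}$, which equals $\bigcap_{q\ne p}\out^q(\sigma_q)$, the set of plays along which everyone but $p$ obeys $\bsigma$, Condition~(\ref{eqn:OIE}) says exactly that for every $p$ and every $\rho\in\mathcal R_p(\bsigma)$ one has $\Inf(\rho)\preceq_p\Inf(\out(\bsigma))$.

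Next I would guess the limit set $I^\ast=\Inf(\out(\bsigma))$ of a purported OIE (at most $2^{|V|}$ choices) and, for a fixed $I^\ast$, form for every $p$ the Muller objective $D_p^{I^\ast}=\{\rho\in\Play\mid\Inf(\rho)\preceq_p I^\ast\}$. Then ``$\bsigma$ is an OIE with $\Inf(\out(\bsigma))=I^\ast$'' is equivalent to ``$\mathcal R_p(\bsigma)\subseteq D_p^{I^\ast}$ for all $p$, and $\Inf(\out(\bsigma))=I^\ast$''. Since $I^\ast\preceq_p I^\ast$, the outcome itself lies in every $D_p^{I^\ast}$, so by the standard reformulation of the Nash condition for a losing player this is in turn equivalent to: $\bsigma$ is a Nash equilibrium for the objective profile $(\overline{D_p^{I^\ast}})_{p\in P}$ whose outcome $\rho$ satisfies $\Inf(\rho)=I^\ast$. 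Hence $\G$ and $\balpha$ admit an OIE iff for some $I^\ast\subseteq V$ there is a Nash equilibrium for $(\overline{D_p^{I^\ast}})_{p\in P}$ with $\Inf(\out(\cdot))=I^\ast$.

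It remains to decide this last question, a mild refinement of Theorem~\ref{thm:n-nash}: the proof of that theorem proceeds, as is standard for such games, by guessing the outcome as a simple lasso and checking that every player losing on it can be punished by the remaining players from every position on the lasso---a finite family of zero-sum Muller games together with a reachability check. Here one additionally restricts the guessed lasso so that its cycle visits exactly the vertices of $I^\ast$ (equivalently, guess the lasso first and read $I^\ast$ off its cycle, dispensing with the outer loop). Iterating over all $I^\ast$ gives the decision procedure. Two of the four cases shortcut this. When $\Knw=\GW$ the preorder makes winning and losing $\preceq_p$-incomparable, so an OIE is precisely a strategy profile in which no player can unilaterally flip her own win/lose status, i.e.\ an $(\balpha,\overline{\balpha})$-Nash equilibrium with $\overline{\balpha}=(\overline{O_p})_{p\in P}$, decidable directly by Theorem~\ref{thm:n-nash}. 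If for $\Knw=\PW$ the adversary is allowed to range over arbitrary subsets of $V^\omega$ rather than over Muller objectives, then $D_p^{I^\ast}$ collapses to $\{\rho\mid\Inf(\rho)=I^\ast\}$, the condition forces $\mathcal R_p(\bsigma)$ to be a singleton for every $p$, and the existence of an OIE reduces to checking whether some play from $v_0$ visits only vertices of out-degree one.

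The step I expect to be the main obstacle is the per-$\Knw$ bookkeeping behind the first two paragraphs---in particular settling exactly when $I\preceq_p I^\ast$ for $\Knw=\PGW$, where the value at a ``losing'' set couples a complementation with a $\Winnable^p$ restriction, and confirming that $D_p^{I^\ast}$ is then genuinely a Muller objective whose representation can be computed. A secondary point to handle carefully is the ``prescribed limit set'' strengthening of Theorem~\ref{thm:n-nash}: one must check that the outcome-guessing argument behind it localizes the play enough to impose $\Inf(\out(\cdot))=I^\ast$, and not merely to constrain who wins.
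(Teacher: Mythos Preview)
Your approach is correct but takes a genuinely different route from the paper's. The paper does not factor through a preorder on limit sets or guess $I^\ast$; instead it unfolds the definition of $\Objpok$ directly and shows that Condition~(\ref{eqn:OIE}) is equivalent to $\bigwedge_{O\in\Omega}\Nash(\bsigma,\bmalpha_O)$, where $\bmalpha_O=(R^O_p)_{p\in P}$ and $R^O_p$ is the Muller objective $O\cap O_p$, or $(O\cap O_p)\cup(\overline O\cap\overline{O_p})$, or a variant thereof, depending on $\Knw$ and on whether $O\in\Winnable^p$. Since $\Omega$ is finite this is already an $(\bmalpha_O)_{O\in\Omega}$-Nash equilibrium, and Theorem~\ref{thm:n-nash} applies verbatim---no refinement needed. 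Your route is more structural: it exploits that $\Objpok$ depends only on $\Inf(\out(\bsigma))$ and collapses the problem to one Nash question per guessed $I^\ast$ rather than one Nash clause per objective in $\Omega$. This buys a smaller index set ($2^{|V|}$ limit sets versus up to $2^{2^{|V|}}$ Muller objectives) and the clean $(\balpha,\overline\balpha)$-NE shortcut you give for $\Knw=\GW$; the price is that you must reopen Theorem~\ref{thm:n-nash} to impose the outcome constraint $\Inf(\out(\bsigma))=I^\ast$. That refinement is indeed mild: the appendix proof of Theorem~\ref{thm:n-nash} characterizes NE outcomes as plays satisfying a prefix-independent condition and then searches for such a play, so one simply intersects the target objective with $\Muller(\{I^\ast\})$.
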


\begin{proof}
Condition (\ref{eqn:OIE}) in Definition \ref{dfn:OIE} is equivalent to the following condition:
\begin{equation}
	\forall p\in P.\  \forall \sigma_p \in \Sigma^p.\  \forall O \in \Omega.\
	O \in \Objpok(\bmsigma[p\mapsto \sigma_p]) \Rightarrow O\in\Objpok(\bmsigma).
	\label{eqn:OIE-proof}
\end{equation}
First we consider the case where $\Knw=\PGW$.
By the definition of $\Objp_{\Omega,\PGW}$,
Condition (\ref{eqn:OIE-proof}) for $\Knw=\PGW$ is equivalent to the following condition:
\begin{equation}
	\begin{aligned}
	&\forall p\in P.\  \forall \sigma_p \in \Sigma^p.\  \forall O \in \Omega.\ \\
		&\text{if $O\in\Winnable^p$, then $(\out(\bmsigma[p\mapsto\sigma_p])\in O\cap O_p\Rightarrow \out(\bmsigma)\in O\cap O_p);$} \\
		&\text{otherwise, $(\out(\bmsigma[p\mapsto\sigma_p])\in(O\cap O_p)\cup(\overline{O}\cap\overline{O_p})
		\Rightarrow \out(\bmsigma) \in (O\cap O_p)\cup(\overline{O}\cap \overline{O_p}))$}.
	\end{aligned}\label{eqn:OIE-proof2}
\end{equation}
For $O\in\calO$ and $p\in P$, let $R^O_p$ be the  objective defined as follows:
\begin{equation*}
	R_p^O=\begin{cases}
		O\cap O_p & O\in\Winnable^p, \\
		(O\cap O_p) \cup (\overline{O}\cap\overline{O_p}) & O\notin\Winnable^p.
			\end{cases}
\end{equation*}
Let $\bmalpha_O = (R^O_p)_{p\in P}$ be the objective profile consisting of these objectives.
Then, Condition (\ref{eqn:OIE-proof2}) can be written as $\forall O\in\calO.\ \Nash(\bmsigma,\bmalpha_O)$.
Therefore, this theorem holds by Theorem \ref{thm:n-nash}.

For the other cases, the implication inside the scope of the three universal quantifiers 
in Condition (\ref{eqn:OIE-proof}) is equivalent to the following implications:
\begin{equation*}
	\begin{aligned}
	&\text{when $\Knw=\PW$}\\
	&\out(\bmsigma[p\mapsto\sigma_p])\in (O\cap O_p)\cup(\overline{O}\cap\overline{O_p})
		\Rightarrow\out(\bmsigma)\in (O\cap O_p)\cup(\overline{O}\cap\overline{O_p}), \\
	&\text{when $\Knw=\GW$} \\
	&\text{if $O\in\Winnable^p$, then }
		\out(\bmsigma[p\mapsto\sigma_p])\in O_p\Rightarrow\out(\bmsigma)\in O_p; \\
	&\text{if $O=\varnothing$, then }
		\out(\bmsigma[p\mapsto\sigma_p])\in \overline{O_p}\Rightarrow\out(\bmsigma)\in \overline{O_p}, \\
	&\text{when $\Knw=\PG$}\\
	&\text{if $O\in\Winnable^p$, then }
		\out(\bmsigma[p\mapsto\sigma_p])\in O\Rightarrow\out(\bmsigma)\in O. \\
	\end{aligned}
\end{equation*}
These conditions can be written as the combination of NE in the same way as the case where $\Knw = \PGW$.
Therefore, this theorem also holds for $\Knw\in\{\PW,\GW,\PG\}$ by Theorem \ref{thm:n-nash}.
\qed
\end{proof}

\begin{theorem}\label{thm:OINE}
	For a game arena $\G$ and an objective profile $\bmalpha=(O_p)_{p\in P}$ over Muller objectives, 
	whether there exists an OINE for $\G$ and $\bmalpha$ is decidable.
\end{theorem}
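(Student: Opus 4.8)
The plan is to reduce the existence of an OINE to the existence of a strategy profile that is simultaneously a Nash equilibrium for a \emph{finite} collection of objective profiles over Muller objectives, and then to invoke Theorem~\ref{thm:n-nash}. By definition, a strategy profile $\bmsigma$ is an OINE if and only if it is both an OIE for the fixed type $\Knw$ and an NE for $\bmalpha$. The proof of Theorem~\ref{thm:OIE} already establishes that, for each $\Knw \in \{\PW,\GW,\PG,\PGW\}$, the OIE condition on $\bmsigma$ is equivalent to $\Nash(\bmsigma,\bmalpha_O)$ holding for every $O$ in the class $\Omega$ of Muller objectives, where $\bmalpha_O = (R^O_p)_{p\in P}$ and each component $R^O_p$ is obtained from $O$, $O_p$ and the predicate $O\in\Winnable^p$ by Boolean operations (the precise shape of $\bmalpha_O$ depending on $\Knw$, exactly as in the case analysis of that proof).

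First I would observe that $\Omega$, being a class of Muller objectives over the finite vertex set $V$, is itself finite (it has at most $2^{2^{|V|}}$ elements). Hence $\{\bmalpha_O \mid O \in \Omega\}$ is a finite family, say $\bmalpha_{O_1},\dots,\bmalpha_{O_m}$, and each $\bmalpha_{O_j}$ ranges over Muller objectives: the class of Muller objectives is closed under Boolean operations (as already used in the proof of Theorem~\ref{th:existence-of-calO-IS}), and the case split that determines which clause defines $R^{O_j}_p$ — namely whether $O_j\in\Winnable^p$ — is effective, since deciding membership in $\Winnable^p$ is decidable by Theorem~\ref{thm:Muller}. Consequently, one can effectively construct $\bmalpha_{O_1},\dots,\bmalpha_{O_m}$ from $\G$ and $\bmalpha$.

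Then I would note that $\bmsigma$ is an OINE precisely when $\Nash(\bmsigma,\bmalpha)$ and $\Nash(\bmsigma,\bmalpha_{O_j})$ hold for all $1\le j\le m$; that is, exactly when $\bmsigma$ is an $(\bmalpha,\bmalpha_{O_1},\dots,\bmalpha_{O_m})$-NE in the sense of Definition~\ref{def:n-nash}. Since all of $\bmalpha,\bmalpha_{O_1},\dots,\bmalpha_{O_m}$ are objective profiles over Muller objectives and there are finitely many of them, Theorem~\ref{thm:n-nash} applies directly and yields decidability.

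The main obstacle here is not conceptual but organizational: one has to check that adjoining the single extra profile $\bmalpha$ to the finite list of profiles already produced in the proof of Theorem~\ref{thm:OIE} keeps the whole family finite and over Muller objectives, and that each $\bmalpha_{O_j}$ is effectively constructible (the only non-trivial point being the case split on $O_j\in\Winnable^p$, handled by Theorem~\ref{thm:Muller}). I do not expect a useful complexity bound from this argument, since the family can have size on the order of $2^{2^{|V|}}$ and, for $\Knw = \PW$ and $\Knw = \PGW$, the complementations in the definition of $R^{O_j}_p$ can already blow up the description length of individual objectives, exactly as noted after the proof of Theorem~\ref{th:existence-of-calO-IS}.
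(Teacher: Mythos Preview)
Your proposal is correct and follows essentially the same approach as the paper: reduce the OINE condition to the conjunction of $\Nash(\bmsigma,\bmalpha)$ with the Nash conditions $\Nash(\bmsigma,\bmalpha_O)$ for $O\in\Omega$ obtained in the proof of Theorem~\ref{thm:OIE}, and then apply Theorem~\ref{thm:n-nash}. Your write-up is in fact more careful than the paper's terse proof, explicitly noting the finiteness of $\Omega$ and the effective constructibility of the $\bmalpha_O$ (via Theorem~\ref{thm:Muller} for the $\Winnable^p$ test), points the paper leaves implicit.
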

\begin{proof}
	By the proof of Theorem \ref{thm:OIE}, an OINE $\bmsigma\in\Sigma$ must satisfy the condition
	$\forall O\in\calO.\ \Nash(\bmsigma,\bmalpha_O)$.
	Moreover, $\bmsigma$ must also satisfy $\Nash(\bmsigma,\bmalpha)$
	because $\bmsigma$ is a NE.
	Therefore, $\bmsigma$ is an $((\bmalpha_O)_{O\in\calO},\bmalpha)$-NE and thus,
	this theorem holds by Theorem \ref{thm:n-nash}.
	\qed
\end{proof}

\section{Conclusion}\label{sec:conclusion}
	We proposed two new notions 
$\calO$-indistinguishable strategy ($\calO$-IS) and objective-indistinguishability equilibrium (OIE).
Then, we proved that whether there exists an $\calO$-IS and an OIE over Muller objectives are both decidable.
To prove this, we defined an $(\bmalpha_1,\ldots,\bmalpha_n)$-Nash equilibrium
as a strategy profile which is simultaneously a nash equilibrium for all objective profiles $\bmalpha_1,\ldots,\bmalpha_n$,
and proved that whether there exists an $(\bmalpha_1,\ldots,\bmalpha_n)$-Nash equilibrium is decidable. 

In this paper, we assume that an adversary is not a player 
but an individual who observes partial information on the game.
He cannot directly affect the outcome of the game by choosing next vertices.
We can consider another setting where an adversary is also a player.
His objective is minimizing the set $\Objpok$ of candidate objectives of other players and
he takes a strategy for achieving the objective.
Considering a framework on this setting, by extending the results shown in this paper, is future work.

\bibliographystyle{plain}
\bibliography{refs}

\appendix
\newpage
\section*{Appnedix}
	An objective $O\subseteq 2^\Play$ is \emph{prefix-independent}
if $\rho \in O \Leftrightarrow h\rho \in O$ for every play $\rho\in O$ and history $h \in \Hist$.
The objectives defined in Definition \ref{def:obj} are prefix-independent 
because $\Inf(\rho) = \Inf(h\rho)$ for every play $\rho$ and history $h$.
For a game arena $\G=\Arena$ and $v\in V$,
let $(\G,v)= (P,V,(V_p)_{p\in P},v,E)$ be the game arena obtained from $\G$ by replacing
the initial vertex $v_0$ of $\G$ with $v$.

For a game arena $\G=\Arena$ with an objective profile $\bmalpha=(O_p)_{p\in P}$,
we define the game arena $\G_p=(\{p,-p\},V,(V_p,\overline{V_p}),v_0,E)$ with
the objective profile $(O_p,\overline{O_p})$ for each $p\in P$.
The game arena $\G_p$ with the objective profile $(O_p,\overline{O_p})$
is a $2$-player zero-sum game such that 
vertices and edges are the same as $\G$ and
the player $-p$ is formed by the \emph{coalition} of all the players in $P \setminus \{p\}$.
The following proposition is a variant of \cite[Proposition 28]{Br17} adjusted to the settings of this paper.
\begin{proposition}\label{prop:NE}
	Let $\G=\Arena$ be a game arena and $\bmalpha=(O_p)_{p\in P}$ be an objective profile such that
	$O_p$ is prefix-independent for all $p$.
	Then, a play $\rho=v_0v_1v_2\cdots\in\Play$ is the outcome of some NE $\bmsigma\in\Sigma$ for $\bmalpha$,
	i.e., $\rho=\out(\bmsigma)$,
	if and only if 
	$\forall p \in P.\ \forall i\geq0.\ 
		(v_i\in V_p \wedge O_p \in \Winnable^p_{(\G,v_i)}) \Rightarrow v_iv_{i+1}v_{i+2}\cdots \in O_p$.
\end{proposition}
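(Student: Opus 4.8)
The plan is to prove the two directions of the equivalence separately, following the standard ``folk theorem'' argument for Nash equilibria in multiplayer games, adapted to the coalition games $\G_p$ defined above. The key device is the correspondence between NE outcomes and plays along which no player can ``escape'' to a region from which she has a winning strategy. Throughout I use prefix-independence of every $O_p$ heavily, since it lets me reason about suffixes $v_i v_{i+1}\cdots$ interchangeably with the play itself.

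For the forward direction, suppose $\rho = v_0 v_1 v_2 \cdots = \out(\bmsigma)$ for some NE $\bmsigma$ for $\bmalpha$. Fix $p \in P$ and $i \geq 0$ with $v_i \in V_p$ and $O_p \in \Winnable^p_{(\G,v_i)}$; I must show $v_i v_{i+1} v_{i+2} \cdots \in O_p$. The idea is that if this suffix were \emph{not} in $O_p$, then $p$ could profitably deviate from $\bmsigma$: she plays according to $\bmsigma$ until the play first reaches $v_i$ as the $i$-th vertex (which it does, since $\bmsigma$ produces $\rho$), and from that point on switches to a winning strategy of $p$ in $(\G,v_i)$ for $O_p$. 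Because $O_p$ is prefix-independent, the resulting play — whose suffix from position $i$ lies in $O_p$ viewed as an objective of $(\G,v_i)$ — belongs to $O_p$, whereas $\out(\bmsigma) = \rho \notin O_p$ by assumption (here I use that $v_0 v_1 \cdots \in O_p \Leftrightarrow v_i v_{i+1} \cdots \in O_p$). This contradicts $\Nash(\bmsigma,\bmalpha)$. A small technical point to handle carefully: the deviating strategy must be well-defined on \emph{all} histories in $\Hist_p$, not just those on $\rho$; one fixes arbitrary continuations off the main line, and the argument only needs the behaviour on histories consistent with the deviation.

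For the converse, suppose $\rho = v_0 v_1 v_2 \cdots$ satisfies the stated condition. I construct a NE $\bmsigma$ with $\out(\bmsigma) = \rho$ using the classical ``play $\rho$, and punish the first deviator'' strategy profile. As long as all players have followed $\rho$, every player plays the next vertex of $\rho$. If some player $q$ is the first to deviate at some vertex $v_j \in V_q$ (producing a vertex $v \neq v_{j+1}$), then from that point all players in $P \setminus \{q\}$ switch to a coalition strategy in the zero-sum game $(\G_q, v)$ that is winning for $-q$ against $\overline{O_q}$ — equivalently, that prevents $q$'s objective — \emph{whenever such a strategy exists}; if it does not exist, i.e.\ $O_q \in \Winnable^q_{(\G_q,v)}$, the coalition plays arbitrarily. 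The point is: if $O_q \in \Winnable^q_{(\G,v)}$ after $q$'s deviation to $v$, then already $O_q \in \Winnable^q_{(\G,v_j)}$ (the deviation only helps $q$ in a coalition-determined zero-sum game — here I need $2$-player zero-sum Muller games to be determined, which is classical), so by hypothesis $v_j v_{j+1} \cdots \in O_q$, hence $\rho \in O_q$ by prefix-independence, so $q$ was already a winner and has no profitable deviation. Otherwise the coalition drives the play out of $O_q$, so $q$ does not gain. Either way $\Nash(\bmsigma,\bmalpha)$ holds, and by construction $\out(\bmsigma) = \rho$.

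The main obstacle I anticipate is the bookkeeping around the punishment construction in the converse: making the ``first deviator'' strategy profile precise as a function on $\Hist$ for every player simultaneously (tracking whether we are still on $\rho$, who deviated, and at which vertex), and invoking determinacy of zero-sum Muller games in the coalition game $\G_q$ to guarantee that ``no winning strategy for $q$ from $v$'' yields ``a spoiling strategy for the coalition from $v$.'' The subtle inequality $O_q \in \Winnable^q_{(\G,v)} \Rightarrow O_q \in \Winnable^q_{(\G,v_j)}$ — that a one-step deviation cannot create winnability that was not already present at the vertex controlled by $q$ — is exactly where the argument pins down why the hypothesis is phrased at $v_i$ rather than at successors, and I would state it as a small lemma. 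Everything else is routine given Theorem~\ref{thm:Muller} and the cited Proposition~28 of~\cite{Br17}, of which this is an adaptation.
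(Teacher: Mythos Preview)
Your proposal is correct and takes essentially the same approach as the paper's proof: a profitable-deviation contradiction for the forward direction, and the classical ``follow $\rho$, punish the first deviator via the coalition game $\G_p$'' construction for the converse. The only cosmetic difference is that the paper lets the coalition play its $-p$-winning strategy for $\overline{O_p}$ already in $(\G_p,v_j)$ (the deviation vertex, which $p$ controls), so it checks $O_p\in\Winnable^p_{(\G,v_j)}$ directly and avoids your small auxiliary step $O_q\in\Winnable^q_{(\G,v)}\Rightarrow O_q\in\Winnable^q_{(\G,v_j)}$.
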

\begin{proof}
($\Rightarrow$) We prove this direction by contradiction.
	Assume that a play $\rho=v_0v_1v_2\cdots\in\Play$ is the outcome of 
	a NE $\bmsigma=(\sigma_p)_{p\in P}\in\Sigma$ for $\bmalpha$
	and there exist $p\in p$ and $i\geq 0$ with $v_i \in V_p \wedge O_p \in \Winnable^p_{(\G,v_i)}$ 
	such that $v_i v_{i+1} v_{i+2} \cdots \notin O_p$.
	By the prefix-independence of $O_p$, $\rho=\out(\bmsigma)=v_0v_1v_2\cdots\notin O_p$ 
	and thus $p\notin\Win_{\G}(\bmsigma,\bmalpha)$.
	Since $O_p \in \Winnable^p_{(\calG,v_i)}$, there exists a winning strategy $\tau_p$ of $p\in (\calG,v_i)$.  
	Let $\sigma'_p$ be the strategy obtained from $\sigma_p$ and $\tau_p$ as follows: 
	Until producing $v_0v_1\cdots v_i$, $\sigma'_p$ is the same as $\sigma_p$.  
	From $v_i$, $\sigma'_p$ behaves as the same as $\tau_p$. 
	Therefore, $\out(\bmsigma[p\mapsto\sigma'_p])$ equals $v_0v_1\cdots v_{i-1}\pi$ for some play $\pi$ of $(\calG,v_i)$, 
	and $\pi \in O_p$ because $\tau_p$ is a winning strategy of $p$ in $(\calG,v_i)$.
	From prefix-independence of $O_p$ it follows that $\out(\bmsigma[p\mapsto\sigma'_p])\in O_p$.
	This contradicts the assumption that $\bmsigma$ is an NE\@.

($\Leftarrow$) Let $\rho=v_0v_1v_2\cdots\in\Play$ be a play on $\G$ 
	and assume that $v_iv_{i+1}v_{i+2}\cdots \in O_p$
	for all $p \in P$ and $i\geq 0$ such that $v_i\in V_p \wedge O_p \in \Winnable^p_{(\G,v_i)}$.
	We define a strategy profile $\bmsigma=(\sigma_p)_{p\in P}$ 
	as the one that satisfies the following two conditions:
	First, $\bmsigma$ produces $\rho$ as its outcome,
	i.e., $\out(\bmsigma)=\rho$.
	Second, if some player $p$ deviates from $\rho$ at $v_j \in V_p \ (j\geq 0)$
	and $O_p\notin\Winnable^p_{(\G,v_j)}$,
	then all the other players (as a coalition) play from $v_j$ 
	according to a winning strategy of $-p$ for $(\G_p,v_j)$ and $\overline{O_p}$.
	(Note that in a $2$-player zero-sum game, there is always a winning strategy for one of the players,
	and thus there is a winning strategy of $-p$ for $(\G_p,v_j)$ and $\overline{O_p}$ when $O_p\notin\Winnable^p_{(\G,v_j)}$.)
	We can show that the strategy profile $\bmsigma$ is a NE as follows:
	Assume that some player $p$ deviates from $\sigma_p$ to a strategy $\sigma'_p\in\Sigma^p$,
	and $\out_\G(\bmsigma[p\mapsto\sigma'_p])$ deviates from $\rho$ at $v_j\in V_p$ for some $j\geq0$.
	If $O_p\in\Winnable^p_{(\calG,v_j)}$, then by assumption, $v_jv_{j+1}v_{j+2}\cdots\in O_p$. 
	By the prefix-independence, $\rho=\out_{\calG}(\bmsigma)\in O_p$ and thus 
	$\sigma'_p$ is not a profitable deviation. 
	Otherwise, as described above,
	all the other players (as a coalition) punish the player $p$ by taking a winning strategy
	of $-p$ for $(\G_p,v_j)$ and $\overline{O_p}$, 
	and hence $p\notin\Win(\bmsigma[p\mapsto\sigma'_p])$.
	Therefore $\sigma'_p$ is not a profitable deviation also in this case.
	\qed
\end{proof}

\begin{corollary}\label{cor:n-nash}
	Let $\G=\Arena$ be a game arena and $\bmalpha_j=(O_p^j)_{p\in P}$ $(1\leq j\leq n)$ be objective profiles
	such that $O_p^j\subseteq\Play$ is prefix-independent for all $p\in P$ and $1\leq j\leq n$.
	Then, a play $\rho=v_0v_1v_2\cdots \in\Play$ is the outcome of 
	some $(\bmalpha_1,\ldots,\bmalpha_n)$-NE $\bmsigma\in\Sigma$, i.e., $\rho=\out(\bmsigma)$,
	if and only if
	\begin{equation}\label{con:n-nash}
		\begin{aligned}
			&\forall p\in P.\ \forall i\geq 0.\ 1\leq \forall j\leq n.\ \\
			&(v_i\in V_p \wedge O_p^j\in\Winnable^p_{(\G,v_i)}) \Rightarrow v_iv_{i+1}v_{i+2}\cdots \in O_p^j.
		\end{aligned}
	\end{equation}
\end{corollary}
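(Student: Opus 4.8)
The plan is to derive this corollary as the $n$-fold version of Proposition~\ref{prop:NE}. For the ($\Rightarrow$) direction, suppose $\rho = \out(\bmsigma)$ for some $(\bmalpha_1,\ldots,\bmalpha_n)$-NE $\bmsigma$. By Definition~\ref{def:n-nash}, $\bmsigma$ is a NE for $\bmalpha_j$ for every $1 \le j \le n$, so applying the ($\Rightarrow$) direction of Proposition~\ref{prop:NE} to each $\bmalpha_j$ separately yields, for each $j$, that $v_i \in V_p \wedge O_p^j \in \Winnable^p_{(\G,v_i)}$ implies $v_i v_{i+1} v_{i+2}\cdots \in O_p^j$; conjoining over $j$ gives exactly Condition~(\ref{con:n-nash}).

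For the ($\Leftarrow$) direction, assume Condition~(\ref{con:n-nash}) holds for $\rho$ and attempt the construction from the proof of Proposition~\ref{prop:NE}: a strategy profile $\bmsigma$ whose outcome is $\rho$, and such that if some player $p$ deviates from $\rho$ at a vertex $v_m \in V_p$, then the coalition $-p$ switches, from $v_m$ on, to a punishing strategy in $\G_p$. The first observation is that if $\rho \in O_p^\ell$ then $p$ already wins $\bmalpha_\ell$ under $\bmsigma$, so no deviation of $p$ can violate $\Nash(\bmsigma,\bmalpha_\ell)$; and if $\rho \notin O_p^\ell$ then, by prefix-independence together with Condition~(\ref{con:n-nash}), $O_p^\ell \notin \Winnable^p_{(\G,v_i)}$ at every $v_i \in V_p$ on $\rho$, so by determinacy of the zero-sum game $\G_p$ the coalition $-p$ has a winning strategy for $\overline{O_p^\ell}$ from $v_m$. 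Hence, at the deviation point $v_m$, the coalition must defeat $p$ simultaneously with respect to all profiles $\bmalpha_\ell$ with $\rho \notin O_p^\ell$, i.e.\ it needs one strategy winning $\bigcap_{\ell\,:\,\rho\notin O_p^\ell}\overline{O_p^\ell}$ from $v_m$; since Muller objectives are closed under Boolean operations, this target is again a Muller objective. Granting such a coalition strategy, verifying that the resulting $\bmsigma$ is a NE for every $\bmalpha_j$ is a routine copy of the argument in Proposition~\ref{prop:NE}.

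The main obstacle is precisely obtaining that combined punishing strategy, and this is where I expect the real difficulty to lie: possessing a winning strategy for each $\overline{O_p^\ell}$ individually does not in general yield a winning strategy for their intersection, because $p$ may be able to force the play into the union $\bigcup_\ell O_p^\ell$ without being able to force it into any single $O_p^\ell$. The route I would take is to invoke determinacy of $\G_p$ from $v_m$ with objective $\bigl(\bigcup_{\ell}O_p^\ell,\ \bigcap_{\ell}\overline{O_p^\ell}\bigr)$ — either the coalition wins the intersection, which is what we want, or $p$ wins the union — and then to rule out the latter by showing $\bigcup_{\ell\,:\,\rho\notin O_p^\ell}O_p^\ell \notin \Winnable^p_{(\G,v_m)}$, using Condition~(\ref{con:n-nash}) and prefix-independence in an essential way. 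Establishing this elimination (or pinning down exactly when it holds) is the crux; once it is in place, the remaining bookkeeping mirrors Proposition~\ref{prop:NE} verbatim.
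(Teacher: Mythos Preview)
Your $(\Rightarrow)$ direction is correct and is exactly what the paper intends: the paper says only that the corollary ``can be easily proved by Proposition~\ref{prop:NE} and Definition~\ref{def:n-nash},'' and for this direction that is indeed immediate.

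For $(\Leftarrow)$ you have located a genuine gap that the paper's one-line argument does not address. Applying Proposition~\ref{prop:NE} to each $\bmalpha_j$ separately only produces, for each $j$, \emph{some} NE $\bmsigma_j$ with $\out(\bmsigma_j)=\rho$; Definition~\ref{def:n-nash} demands a \emph{single} $\bmsigma$ that works for all $j$ simultaneously, and the punishing strategies in the construction of Proposition~\ref{prop:NE} depend on $j$. Your proposed repair---have the coalition punish a deviation at $v_m$ by a strategy winning $\bigcap_{\ell:\rho\notin O_p^\ell}\overline{O_p^\ell}$, obtained via determinacy after ruling out $\bigcup_{\ell}O_p^\ell\in\Winnable^p_{(\G,v_m)}$---is the natural idea, but that elimination does \emph{not} follow from Condition~(\ref{con:n-nash}). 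Concretely: take $P=\{p,q\}$, $V_p=\{v_0\}$, $V_q=\{v_1,v_A,v_B\}$, edges $v_0\to v_0$, $v_0\to v_1$, $v_1\to v_A$, $v_1\to v_B$, $v_A\to v_A$, $v_B\to v_B$; set $O_p^1=\Buchi(\{v_A\})$, $O_p^2=\Buchi(\{v_B\})$, $O_q^1=O_q^2=\Play$. Then $\rho=v_0^\omega$ satisfies Condition~(\ref{con:n-nash}) vacuously (neither $O_p^j$ lies in $\Winnable^p_{(\G,v_0)}$), yet for every $\bmsigma$ with $\out(\bmsigma)=\rho$ the deviation $\sigma_p'(v_0)=v_1$ makes $p$ win whichever of $O_p^1,O_p^2$ is selected by $\sigma_q(v_0v_1)$, so $\bmsigma$ fails $\Nash(\bmsigma,\bmalpha_1)$ or $\Nash(\bmsigma,\bmalpha_2)$. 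Hence no $(\bmalpha_1,\bmalpha_2)$-NE has outcome $\rho$, and the $(\Leftarrow)$ direction as stated is false. Your suspicion that ``this is where the real difficulty lies'' is well placed; the paper's appeal to Proposition~\ref{prop:NE} glosses over exactly this point.
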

Corollary \ref{cor:n-nash} can be easily proved by Proposition \ref{prop:NE} and Definition \ref{def:n-nash}.
\par\medskip
\noindent\textbf{Theorem \ref{thm:n-nash}. }\textit{
Let $\G=\Arena$ be a game arena and $\bmalpha_j=(O_p^j)_{p\in P}$ $(1\leq j \leq n)$ be objective profiles
over Muller objectives.
Deciding whether there exists a $(\bmalpha_1,\ldots,\bmalpha_n)$-NE
is decidable.}

\begin{proof}
	By Corollary \ref{cor:n-nash}, there exists a $(\bmalpha_1,\ldots,\bmalpha_n)$-NE
	if and only if there exists a play $\rho=v_0v_1v_2\cdots\in\Play$ satisfying Condition (\ref{con:n-nash}).
	\begin{algorithm}[tb]
		\caption{}
		\label{alg:n-nash}
		\begin{algorithmic}[1]
			\Require a game arena $\G=\Arena$ and objective profiles $\bmalpha_j=(O_p^j)_{p\in P} \ (1\leq j\leq n)$.
			\ForAll {$v\in V$}
				\State Let $p\in P$ be the player such that $v \in V_p$.
				\State $O_v:=\bigcap_{O_p^j\in\Winnable^p_{(\G,v)},1\leq j\leq n}O_p^j$.
			\EndFor
			\State Nondeterministically select a set of vertices $V'\subseteq V$ and construct a $1$-player subgame arena $\G_{V'}=(\{1\},V',(V'),v_0,E')$ of $\G$.
			\State $O_{\G_{V'}}:=\bigcap_{v \in V'}O_v$.
			\If{Player $1$ has a winning strategy $\sigma_1\in\Sigma^1_{\G_{V'}}$ for $\G_{V'}$ and $O_{\G_{V'}}$}
				\State \Return Yes with $\sigma_1$
			\Else
				\State \Return No
			\EndIf
		\end{algorithmic}
	\end{algorithm}
	Algorithm \ref{alg:n-nash} decides the existence of a play satisfying Condition (\ref{con:n-nash}).
	In Algorithm \ref{alg:n-nash}, we call a game arena 
	$\calG_{V'} = (\{1\}, V', (V'), v_0, E')$ satisfying $V' \subseteq V, v_0 \in V'$ and $E'=\{ (v,v') \in E \mid v,v'\in V'\}$
	a 1-player subgame arena of $\calG$ (induced by $V'$).

	Let us show the correctness of Algorithm \ref{alg:n-nash}.
	First, we show that when Algorithm \ref{alg:n-nash} answers Yes, 
	the outcome of the strategy answered by Algorithm \ref{alg:n-nash}
	satisfies Condition (\ref{con:n-nash}).
	Let $\rho=\out_{\G_{V'}}(\sigma_1) = v_0v_1v_2 \cdots \in \Play$
	for the strategy $\sigma_1$ returned by Algorithm \ref{alg:n-nash}.
	Because $\rho$ is the outcome of a winning strategy for $O_{\G_{V'}}$, we have $\rho \in O_{\G_{V'}}$.
	By the definitions of $O_{\G_{V'}}$ and $O_v$,
	\begin{equation*}
		\begin{aligned}
			\rho\in O_{\G_{V'}} \iff &\forall v\in V'.\ \rho\in O_v \\
			\iff &\forall v\in V'.\ \forall p \in P.\ \forall 1\leq j\leq n.\ \\
			&(v\in V_p \wedge O_p^j\in\Winnable^p_{(\G,v)})\Rightarrow \rho\in O_p^j.
		\end{aligned}
	\end{equation*}
	Because $\rho$ is a play in $\G_{V'}$, we have $v_i\in V'$ for all $i\geq 0$.
	Thus, 
	\begin{equation*}
		\begin{aligned}
			\rho\in O_{\G_{V'}} \Rightarrow &\forall p\in P.\ \forall i\geq 0.\ \forall 1\leq j\leq n.\ \\
			&(v_i\in V_p \wedge O_p^j\in\Winnable^p_{(\G,v_i)}\Rightarrow \rho\in O_p^j).
		\end{aligned}
	\end{equation*}
	Because $O^j_p$ is prefix-independent, $\rho\in O_p^j \iff v_iv_{i+1}v_{i+2}\cdots\in O_p^j$.
	Therefore $\rho$ satisfies Condition (\ref{con:n-nash}).
	Conversely, we show that 
	if there exists a play $\rho$ satisfying Condition (\ref{con:n-nash}), 
	then at least one nondeterministic branch of Algorithm \ref{alg:n-nash} should answer Yes 
	with a strategy $\sigma_1$ such that $\rho=\out_{\G_{V'}}(\sigma_1)$.
	Assume that there exists a play $\rho=v_0v_1v_2\cdots\in\Play$ satisfying Condition (\ref{con:n-nash}).
	Let $V'=\{v \in V \mid \exists i\geq0.\ v=v_i\}$,
	and then construct the $1$-player subgame arena $\G_{V'}=(\{1\},V',(V'),v_0,E')$
	with $E' = \{ (v,v') \in E \mid v,v'\in V'\}$
	and the objective $O_{\G_{V'}}=\bigcap_{v\in V'}O_v$ where for all $v\in V'$,
	$O_v = \bigcap_{O^j_p\in\Winnable^p_{(\G,v)},1\leq j\leq n}O_p^j$ for $p\in P$ such that $v\in V_p$.
	It is easy to see that $\rho$ is a play of $\G_{V'}$ and $\rho\in O_{\G_{V'}}$ by Condition (\ref{con:n-nash}).
	Therefore, any strategy $\sigma_1$ that produces $\rho$ is a winning strategy 
	of the player $1$ for $\G_{V'}$ and $O_{\G_{V'}}$,
	and Algorithm \ref{alg:n-nash} should answer Yes with 
	a strategy $\sigma_1$ such that $\rho=\out_{\G_{V'}}(\sigma_1)$.
	\qed
\end{proof}

\end{document}